\begin{document}

\title{\normalsize{ \bf SUMMARIZED: Efficient Framework for Analyzing Multidimensional Process Traces under Edit-distance Constraint}}
\titlerunning{ }
%
\author{\small Phuong Nguyen\inst{1} \and
Vatche Ishakian\inst{2} \and
Vinod Muthusamy\inst{2} \and
Aleksander Slominski\inst{2}}
\authorrunning{P. Nguyen et al.}
%
\institute{University of Illinois Urbana Champaign \email{pvnguye2@illinois.edu}
\and
IBM Research
\email{vatchei@ibm.com,\{vmuthus,aslom\}@us.ibm.com}}
\maketitle

\begin{abstract}

Domains such as scientific workflows and business processes exhibit data models with complex relationships between objects. This relationship is typically represented as sequences, where each data item is annotated with multi-dimensional attributes. There is a need to analyze this data for operational insights. For example, in business processes, users are interested in clustering process traces into smaller subsets to discover less complex process models. This requires expensive computation of similarity metrics between sequence-based data. Related work on dimension reduction and embedding methods do not take into account the multi-dimensional attributes of data, and do not address the interpretability of data in the embedding space (i.e., by favoring vector-based representation). In this work, we introduce \textsc{Summarized}, a framework for efficient analysis on sequence-based multi-dimensional data using intuitive and user-controlled summarizations. We introduce summarization schemes that provide tunable trade-offs between the quality and efficiency of analysis tasks and derive an error model for summary-based similarity under an edit-distance constraint. 
Evaluations using real-world datasets show the effectives of our framework.

\end{abstract}

\section{Introduction}
\label{sec:intro}
Application domains, such as business processes and scientific
workflows, exhibit data models in the form of multi-dimensional
sequence of objects. For example, in business processes, given an
underlying business process model represented as a directed acyclic
graph of activities, the traces generated from the execution of the
model are regarded as instances of the underlying model. Each trace
consists of a sequence of activities sorted by time, where each
activity in the trace appears in the process model and may be repeated\footnote{\scriptsize In this paper, we use
\textit{trace}, \textit{process trace}, and \textit{sequence}
interchangeably to refer to an instance of a process.}.
Figure~\ref{fig:motivating_example1} shows an example of a loan
application process model. The highlighted activities in the figure
represent a possible execution trace of the model.  In addition to the
sequential structure, each activity also contains multi-dimensional
attributes.  For example, an activity in the loan application process
can contain information about the person  who performs the activity, the group to which she belongs, and the department
responsible for the activity. 
In another domain, provenance data captured from the execution of
scientific workflows are also in the form of multi-dimensional
sequences.  Figure~\ref{fig:motivating_example2} shows a sample trace
of a semiconductor manufacturing process, where each activity can
consist of additional information, such as the sector where the
activity is performed and the person responsible for that
activity.

\begin{figure}[tp]
	\begin{minipage}{0.56\textwidth}
	\centering
	\includegraphics[width=1\linewidth]{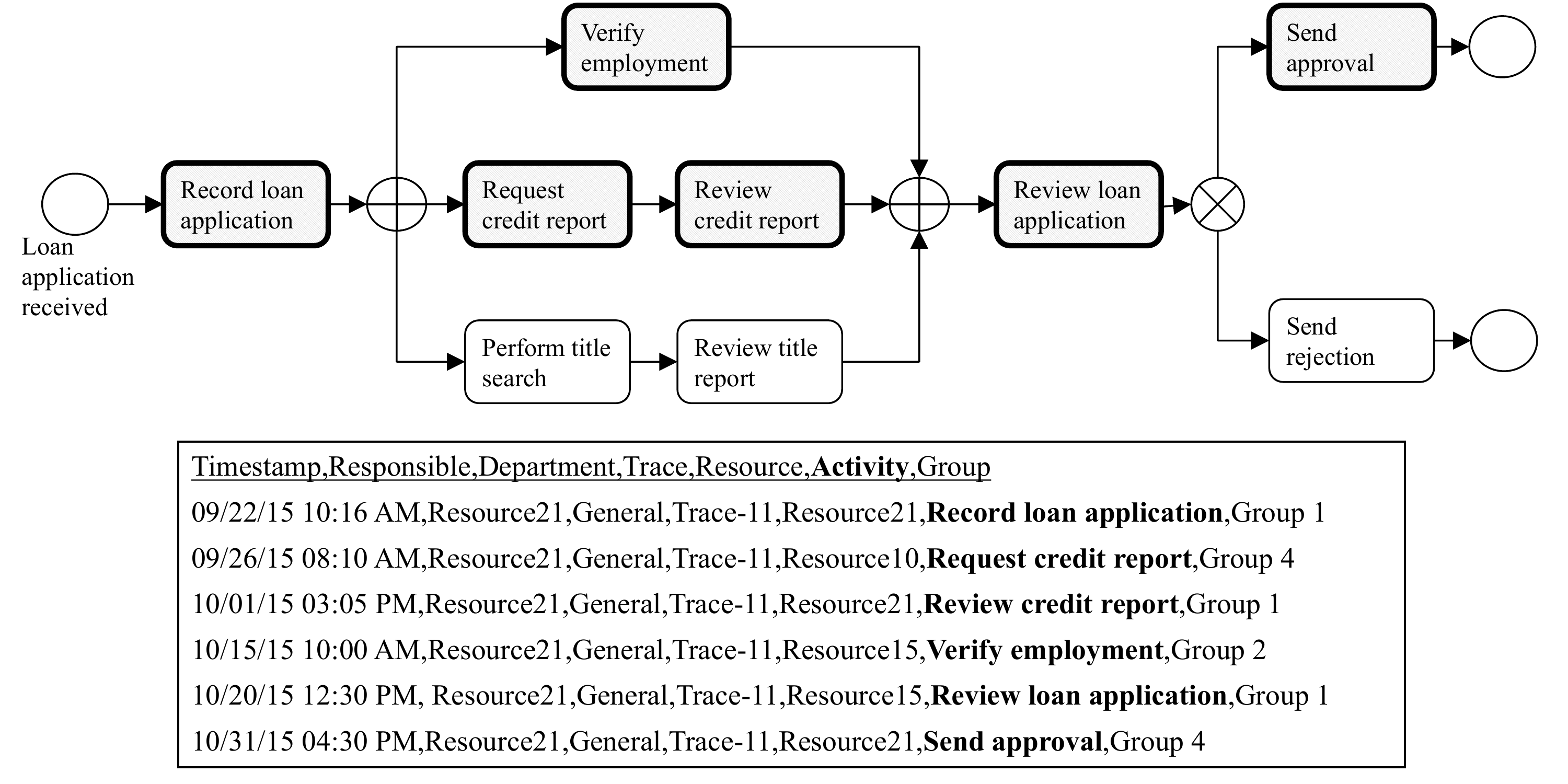}
  \caption{Example 1: loan application process and a sample trace.}
  \label{fig:motivating_example1}
  \end{minipage}
  ~
	\begin{minipage}{0.43\textwidth}
		\centering
		\includegraphics[width=1\linewidth]{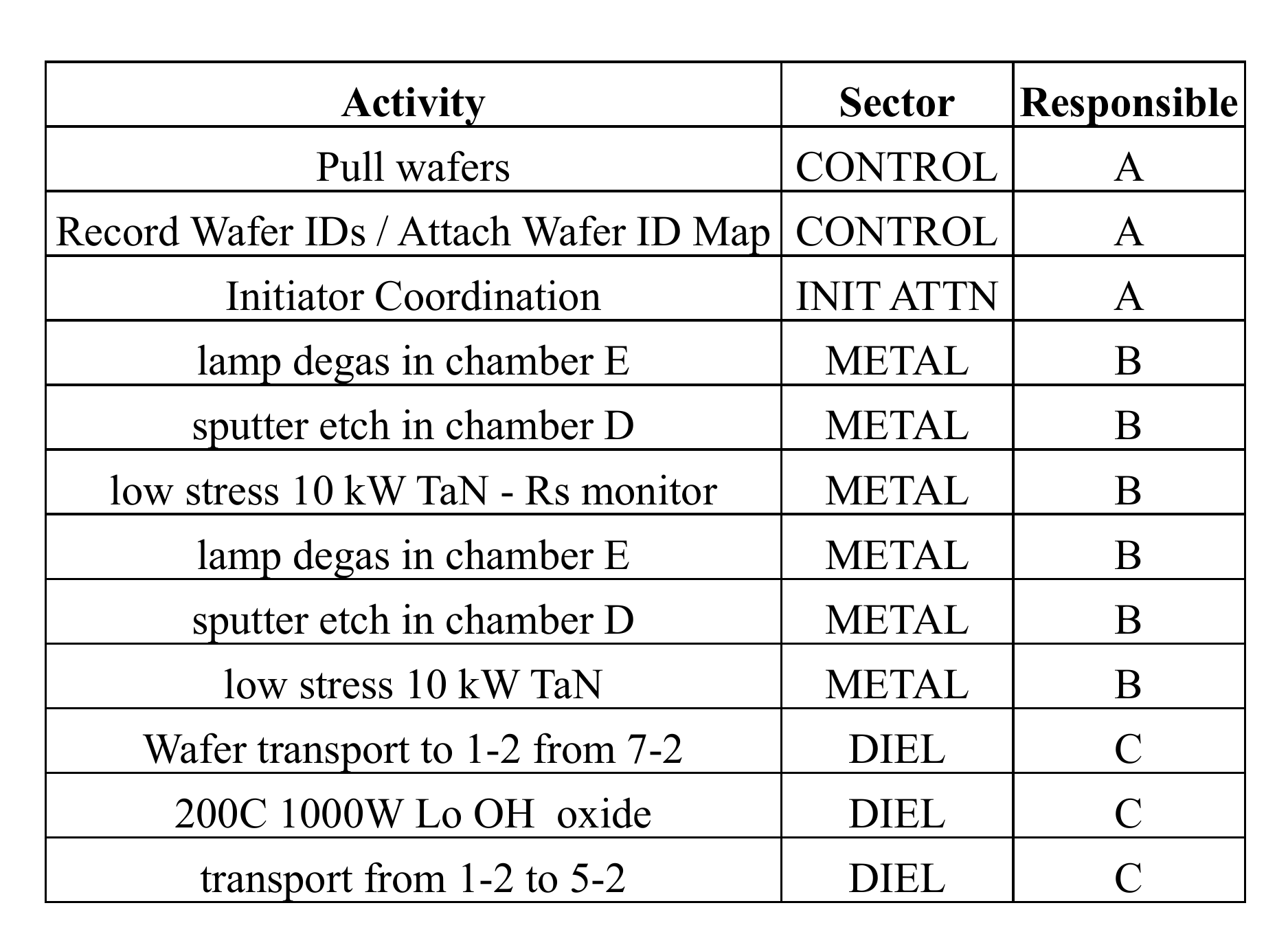}
	  \caption{Example 2: sample trace of a scientific workflow.}
	  \label{fig:motivating_example2}
	\end{minipage}
\vspace{-0.3in}
\end{figure}

With the popularity of such applications, there are increasing needs to analyze
the data for operational insights, and there are many efforts to apply machine
learning techniques in the business process management
field~\cite{vanderAalst2011PMD,masellis2017verification}. As business
models mined from complete process traces are often complex and difficult to
comprehend~\cite{de2007process}, users are interested in clustering process
traces into smaller subsets and applying process discovery
algorithms~\cite{van2004workflow} on each subset. The models discovered using
only the traces in a cluster tend to be both less complex and more accurate
since there is less diversity among the traces within a cluster. In another
example, scientists are interested in querying the provenance data of scientific
workflow executions to look for previous executions of a workflow that are
similar to the one in the query, again using models trained on historical
executions.

Analyzing multi-dimensional sequence data poses a number of challenges. The
\textit{first} challenge is in terms of computational complexity of data
analysis. For example, edit-distance is often used to capture the similarity
between sequences~\cite{bose2009context}. Since edit-distance is quadratic to
the sequence length and each sequence can consist of hundreds of data items
(e.g., in business processes), it is computationally expensive to compute the
similarities between sequences. This is especially challenging when dealing with
large datasets and in applications such as traces clustering, where a lot of
similarity computations need to be calculated. This complexity can also cause
long application delays that affect interactive applications, such as similarity
search, where users interact directly with the application and expect to get the
results in a timely manner. The \textit{second} challenge is to combine
multi-dimensional attributes of data with the sequential structure between data
objects into a unified approach.  Edit-distance, for example, only concerns with
counting the minimum number of basic operations required to transform one
sequence of activities into the other, without considering the attributes of
activities.

In this paper, we 
introduce \textsc{Summarized}, a framework for the efficient analysis
of multi-dimensional sequence data under edit-distance constraint.
We focus on analysis tasks that are based on
edit-distance similarity because it is a widely used measure for sequences.
Instead of performing computationally expensive analysis on the
original high-dimensional data, we transform the data into a summary
space that has fewer dimensions, so that more efficient analysis can
be applied. To incorporate multi-dimensional attributes of data items
into the analysis, we introduce summarization schemes that allow users
to select attributes as the summarization criteria. In addition to
attribute-based summarizations, which produce summaries of good
semantics but are limited in giving users control over the resolution
of summaries (and thus, the efficiency), we also introduce topic-based
summarization that enables the flexible trade-off between quality and
efficiency of analysis tasks on summaries. In addition, we develop an
error model for the edit-distance measure in the summary space to
provide theoretical guarantees for the results of analysis tasks on
summaries.



\section{Related Work}
\label{sec:relatedwork}

There have been active research on \textit{subsequence mapping and sequence retrieval}, especially with biological sequences data \cite{roy2012massive}. To support efficient mapping and retrieval, one of the common approaches is summarize original sequences using q-grams \cite{burkhardt1999q}\cite{li2007vgram}\cite{kim2016hobbes3} and measure the similarity between two sets of q-grams. Another common approach is reference-based method. For example, \cite{papapetrou2009reference}\cite{venkateswaran2006reference} filter the results to a query sequence using precomputed distance between sequences in the database and a set reference sequences. DRESS \cite{kotsifakos2015dress} uses the most frequent codewords as references to identify a set candidate matches of a query. These work, however, do not consider sequences of multi-dimensional attributes of data items in the original sequences. In addition, both q-gram-based and reference-based methods do not preserve the sequential relationship between data items in the original sequences, and thus, do not support similarity measure under edit-distance constraint.

Another area  of related work is on \textit{graph similarity and mining}, where sequence is a special case. Since graph edit-distance is also very computationally expensive, most of the work try to transform the original graph to a more compact representation before measuring similarity. One common transformation approach is based on graph' substructures, such as stars \cite{zeng2009comparing}, trees \cite{zheng2013graph}, branches \cite{li2017efficient}, paths \cite{zhao2012efficient}, or k-shingle \cite{manzoor2016fast}. Recently, there has been effort on solving graph edit-distance using binary linear programming \cite{lerouge2017new}. While some of these work provide error bound on graph edit-distance on substructure space, they only consider homogeneous graphs. In addition, a major issue with this group of related work is that graphs lose their interpretability and graphical representation after being transformed into substructure representation (i.e., graphs are either represented as bags-of-substructures \cite{zeng2009comparing}\cite{zheng2013graph}, or numeric vector \cite{manzoor2016fast}).

In terms of the \textit{intuitive and interpretable summarization of graphs} \cite{liu2018graph}\cite{khan2017summarizing}, Zhao et al. \cite{zhao2011graph} introduce Graph Cube model that supports OLAP queries effectively on large multidimensional networks. Such a model can be used to produce interpretable summaries of original graph, in form of aggregate graphs, by performing cuboid queries. Tian et al. \cite{tian2008efficient} introduce two database-style operations to summarize multi-dimensional graphs: one produces a summary by grouping nodes based on attributes and relationships, while the other allows users to control the resolutions of summaries. Chen et al. \cite{chen2009mining} show that random summaries can help effectively reduce the size of original graph and at the same time, are capable of mining frequent graph patterns. In this paper, besides using explicit attributes, we also leverage the implicit topics as summarization criteria. We also show that, different from general graphs, random summarization on sequences, although produces good effectiveness, suffers from efficiency.

\textit{Embedding methods} \cite{hristescu1999cluster}\cite{faloutsos1995fastmap}\cite{wang2000index} have been used to improve efficiency of similarity search on complex data (see \cite{hjaltason2003properties} for an excellent survey).  However, there are only a few embedding approaches that could guarantee important property of similarity measure on the embedding space, such as contractive property. For example, it might require the similarity measure between data on the embedding space is from a specific family of measure (e.g., Minkowski metric \cite{hristescu1999cluster}). Another major drawback of embedding techniques is that they transform original sequences into vector-based representation, and thus, do not maintain the sequential relationship between data items on the new representation. In this paper, we formally define summarization schemes on sequences and show that contractive property of similarity measure on summary space under edit-distance constraint can be guaranteed.

There have been efforts to address \textit{the scalability issue} in process mining and business process analysis \cite{sayal2002business}. However, these efforts mainly focus on process model discovery of large, complex traces \cite{leemans2015scalable}\cite{van2004workflow}\cite{weijters2006process}. There are also related work on using vector space-based dimensional reduction to improve the performance of traces clustering \cite{song2013comparative}\cite{nguyen2016process}. In this paper, we focus on improving efficiency of traces similarity search and traces clustering \textit{under edit-distance constraint}. There is also related work to perform process discovery on large-scale dataset by using Map-Reduce \cite{evermann2016scalable}. Our work can be used in combination with the related efforts. For example, once traces are clustered into smaller subsets, efficient process discovery algorithms \cite{grigori2001improving} can be applied to each subset.


\section{Framework}
\label{sec:framework}

Figure~\ref{fig:overview} highlights the motivation for designing the \textsc{Summarized} framework. We assume the existence of an original dataset,which consists of a set of process traces or logs of scientific workflow executions. Running an analysis, which would typically be computationally expensive due to the high-dimentionality of the data, provides results which are deemed as exact or ``ground truth" answer.

\begin{figure}[tp]
	\centering
	\includegraphics[width=0.6\linewidth]{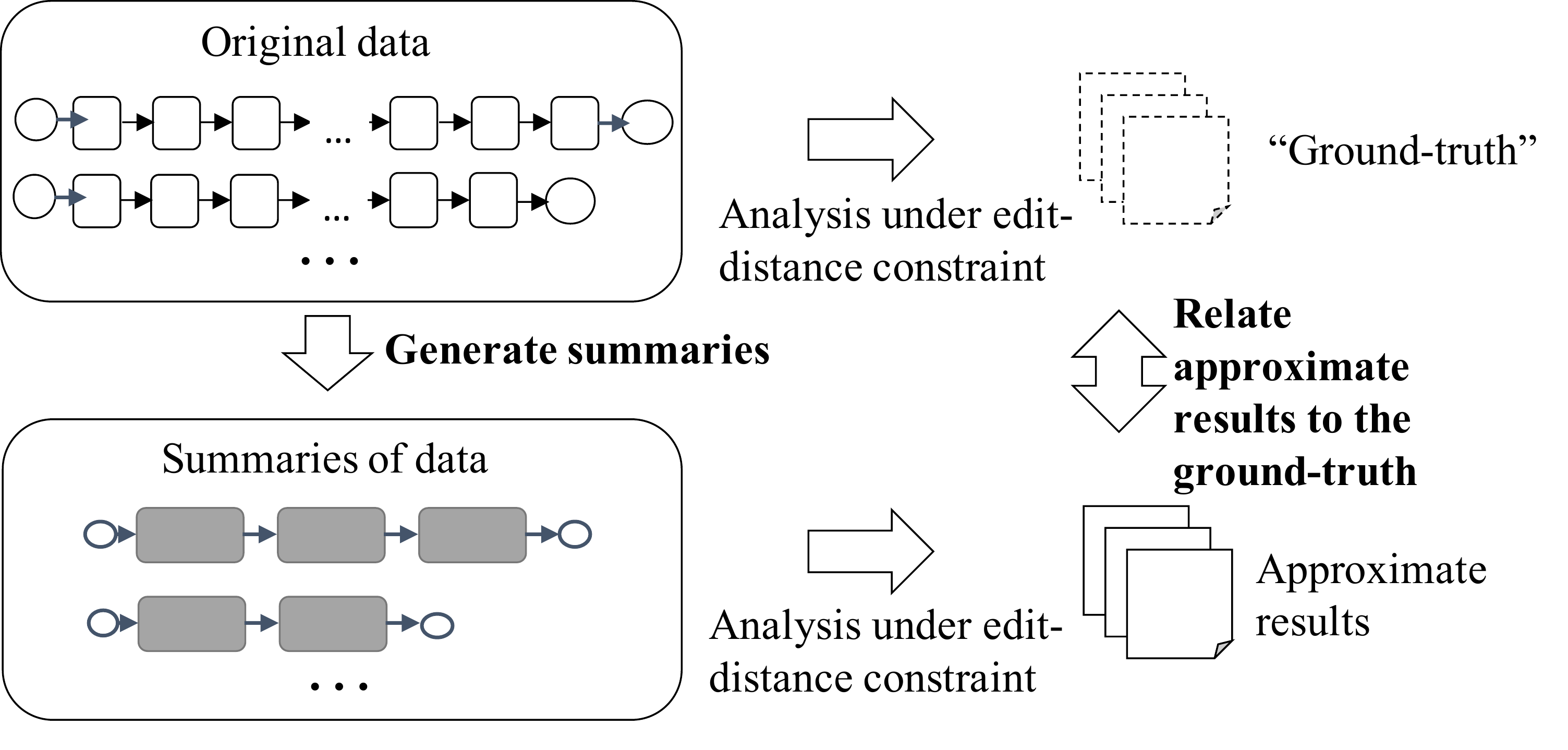}
  \caption{Overview of \textsc{Summarized}'s approach.}
  \label{fig:overview}
  \vspace{-0.25 in}
\end{figure}
The key principal of our framework is to transform the original data into a new space with fewer number of dimensions, thus avoiding the computationally expensive analysis on original dataset. The resulting output, is inherently different than the ``ground truth", is known as an approximate result. To demonstrate the practicality of our framework, we need to address the following two challenges: (1) How to generate summaries of data in a controlled and intuitive manner, and (2) How to relate the approximate results on summaries to the results on original data?



For the first challenge, many sequence- and graph-based (in which sequence is a special form) summarization methods generate summaries using statistics, patterns, or sub-structures of the data. Thus, the resulted summaries are often difficult to interpret by users as they lack the structural semantic connection with the original representation. The lack of structural semantics of summaries also prevents analysis tasks that rely on the structural information (such as edit distance-based analysis, whose results are easy-to-interpret by users) to be performed on summary space. Finally, under currently existing methods, users do not have {\it much} control over the summaries will be generated. As a result, it is difficult to integrate user expertise and feedback into the summarization process to guide the data analysis.

For the second challenge, since the analysis results on data summaries might not be the same to those on original data, it is important to understand the relationship between the two results and for all practical purposes, provide guarantees about the quality of results obtained from summaries.


To address the above mentioned challenges, in the remaining sections, we define sequential-order-preserving summarization on sequences and introduce several summarization schemes that are intuitive and give users more control over the resulted summaries. We also formally present an error model for summary-based similarity measure under edit-distance constraint and show that it provides quality guarantee over the results of similarity search task. 

\section{Definitions}
\label{sec:definitions}





This section provides basic definitions of the notion of multidimensional sequence and summarization of sequences. We define a multidimensional set $\mathscr{O}$ as a set of objects $\mathbb{O}$ and a set of associated attributes $\mathbb{A} = (\mathcal{A}_1, \mathcal{A}_2, ..., \mathcal{A}_{|\mathbb{A}|})$: $\mathscr{O} = \langle\mathbb{O}, \mathbb{A}\rangle$, each object $o \in \mathbb{O}$ is defined as a tuple: $o = (\mathcal{A}_1(o), \mathcal{A}_2(o), ..., \mathcal{A}_{|\mathbb{A}|}(o))$, in which each $i$-th dimension corresponds to the value of attribute $\mathcal{A}_i$ of $o$, denoted as $\mathcal{A}_i(o)$.



A \textit{Multidimensional Sequence} $\mathbf{p}$ of size $m$ on a multidimensional set $\mathscr{O}$ is defined as an ordered set of $m$ objects in $\mathscr{O}$: $\mathbf{p} = (p_1, p_2, ..., p_m), p_i \in \mathscr{O}, 1 \leq i \leq m$.
We denote $\iota_{\mathbf{p}}(p)$ as the \textit{index}, or position, of an object $p$ in a sequence $\mathbf{p}$. In the above definition, $\iota_{\mathbf{p}}(p_i) = i, \forall 1 \leq i \leq m$. For example, Figure~\ref{fig:motivating_example2} presents a sequence of objects defined on a multidimensional set with three attributes: \textit{Activity}, \textit{Sector}, and \textit{Responsible}.

Our interest is in different forms of summarization of multidimensional sequences to improve efficiency of sequence analysis. Before defining summarization of sequences, we define the notion of many-to-one mapping of objects between multidimensional sets as an object mapping function $\mathit{f}$ from an original multidimensional set $\mathscr{O}$ to a summary set $\mathscr{S}$, $\mathit{f}: \mathscr{O} \rightarrow \mathscr{S}$, so that for each $p \in \mathscr{O}, \exists! s \in \mathscr{S}: s = \mathit{f}(p)$.


Next, we define summarization of sequences based on many-to-one mapping $\mathit{f}$, called $\mathit{f}$-\textit{summarization}:

\begin{definition} 
A $\mathit{f}$-summarization of a sequence $\mathbf{p}$ on $\mathscr{O}$ is defined as a summary sequence $\mathbf{s}$ on $\mathscr{S}$, denoted as $\mathbf{s} = \mathit{f}(\mathbf{p})$, where each object $p \in \mathbf{p}$ is replaced by its many-to-one mapping $\mathit{f}$: $s = \mathit{f}(p)$, while retaining the same index $\iota_{\mathbf{s}}(s) := \iota_{\mathbf{p}}(p)$.
\label{def:fsummary}
\end{definition}

A summarization of a sequence is said to preserve the sequential relationship from the original sequence if it satisfies the following definition:

\begin{definition} 
A $\mathit{f}$-summarization of a sequence $\mathbf{p}$, denoted as $\mathbf{s} = \mathit{f}(\mathbf{p})$, is a sequential preserving summarization of $\mathbf{p}$ if: $\forall p, p' \in \mathbf{p}$, if $\iota_{\mathbf{p}}(p) < \iota_{\mathbf{p}}(p')$, then $\iota_{\mathbf{s}}(s) \leq \iota_{\mathbf{s}}(s')$, with $s = \mathit{f}(p), s' = \mathit{f}(p')$.
\label{def:seqpreservation}
\end{definition}

By retaining the indices of objects in the original sequence, $\mathit{f}$-summarization ({\it c.f, Definition~\ref{def:fsummary}}) preserves sequential relationships, which is vital in improving the efficiency of sequence analysis. Therefore, we define the notion of \textit{reduced $\mathit{f}$-summarization}, in which adjacent duplicate objects in the summary sequence are collapsed to reduce the size of a summarized sequence.


\begin{definition} 
A reduced $\mathit{f}$-summarization of a sequence $\mathbf{p}$ on $\mathscr{O}$ is defined as a sequence $\mathbf{s}$ on $\mathscr{S}$, denoted as $\mathbf{s} = \mathit{f}^*(\mathbf{p})$, where each object $p \in \mathbf{p}$ is replaced by its $\mathit{f}$-based mapping $s = \mathit{f}(p)$ in $\mathbf{s}$ and, $\forall p_{i}, p_{i+1} \in \mathbf{p}, 1 \leq i \leq |\mathbf{p}|-1$, if $p_{i} = p_{i+1}$, then $\iota_{\mathbf{s}}(p_i) = \iota_{\mathbf{p}}(p_{i+1})$.
\label{def:reducedfsummary}
\end{definition}

\begin{theorem} 
A reduced $\mathit{f}$-summarization is sequence preserving.
\end{theorem}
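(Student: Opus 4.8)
The plan is to reduce the claim to the weak monotonicity of the index reassignment induced by collapsing adjacent duplicates, and then to establish that monotonicity by a one-line induction on position.

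\textbf{Setup.} Fix a sequence $\mathbf{p} = (p_1, \ldots, p_m)$ on $\mathscr{O}$ and consider its reduced $\mathit{f}$-summarization $\mathbf{s} = \mathit{f}^*(\mathbf{p})$. By Definition~\ref{def:reducedfsummary}, $\mathbf{s}$ is obtained from the (non-reduced) summary $\mathit{f}(\mathbf{p}) = (\mathit{f}(p_1), \ldots, \mathit{f}(p_m))$ of Definition~\ref{def:fsummary} by identifying the positions of adjacent objects whenever their $\mathit{f}$-images coincide. I would record this as an occurrence-indexed map $\rho : \{1, \ldots, m\} \to \{1, \ldots, |\mathbf{s}|\}$ sending the original position $i$ of $p_i$ to the position $\iota_{\mathbf{s}}(\mathit{f}(p_i))$ that this occurrence receives in $\mathbf{s}$: concretely $\rho(1) = 1$, and for $1 \le i \le m-1$, $\rho(i+1) = \rho(i)$ if $\mathit{f}(p_{i+1}) = \mathit{f}(p_i)$ and $\rho(i+1) = \rho(i) + 1$ otherwise. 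This is exactly the assignment prescribed by Definition~\ref{def:reducedfsummary} — a maximal run of adjacent objects with equal $\mathit{f}$-image is mapped to a single position — and it makes $\iota_{\mathbf{s}}$ well defined on occurrences even though $\mathit{f}$ need not be injective.

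\textbf{Monotonicity.} Next I would show $\rho$ is non-decreasing. This is immediate from the recursion: at each step $\rho$ either stays fixed or increases by exactly one, so a trivial induction on $i$ gives $\rho(i) \le \rho(j)$ whenever $i \le j$. (Alternatively, one may invoke the already-noted fact that the non-reduced $\mathit{f}$-summarization is sequence preserving, and observe that each elementary ``merge two adjacent equal objects'' step keeps the occurrence-indexed map weakly monotone; finitely many such steps compose to $\rho$.) Then, for $p, p' \in \mathbf{p}$ with $\iota_{\mathbf{p}}(p) = i < j = \iota_{\mathbf{p}}(p')$ and $s = \mathit{f}(p)$, $s' = \mathit{f}(p')$, we get $\iota_{\mathbf{s}}(s) = \rho(i) \le \rho(j) = \iota_{\mathbf{s}}(s')$, which is precisely the condition of Definition~\ref{def:seqpreservation}. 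Hence $\mathit{f}^*(\mathbf{p})$ is sequence preserving.

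\textbf{Main obstacle.} The only delicate point is bookkeeping rather than mathematics: since $\mathit{f}$ may send several \emph{non-adjacent} objects of $\mathbf{p}$ to the same element of $\mathscr{S}$, the symbol $\iota_{\mathbf{s}}(s)$ is ambiguous unless one tracks occurrences (positions) instead of objects, and one must check that Definition~\ref{def:reducedfsummary} only ever collapses adjacent occurrences, so distinct runs never interact and the map $\rho$ is genuinely well defined and weakly monotone. Once that is pinned down, the argument is the short induction above.
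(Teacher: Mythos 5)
Your proof is correct, and it establishes the same underlying fact as the paper's proof --- that the position an object receives in the reduced summary is a weakly monotone function of its position in the original sequence --- but it is organized differently. The paper writes down a (somewhat awkward) closed-form description of the summary elements and then runs an exhaustive three-case analysis on pairs $(p_i, p_j)$ according to whether they are equal and whether some intermediate element differs; your version instead defines the index-reassignment map $\rho$ by the one-step recursion $\rho(i+1)\in\{\rho(i),\rho(i)+1\}$ and gets monotonicity by a trivial induction, which subsumes all three of the paper's cases at once. Your route also buys something the paper glosses over: you make $\iota_{\mathbf{s}}$ well defined on \emph{occurrences} rather than objects, which matters because $\mathit{f}$ is many-to-one and the same summary object can occur in several non-adjacent runs, so the paper's notation $\iota_{\mathbf{p}'}(p_i)$ is strictly speaking ambiguous (and its case-2 claim $\iota_{\mathbf{p}'}(p_k)=\iota_{\mathbf{p}'}(p_j)$ is only correct under the occurrence-based reading). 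One minor caveat: you collapse adjacent positions when $\mathit{f}(p_i)=\mathit{f}(p_{i+1})$, whereas Definition~\ref{def:reducedfsummary} as literally written triggers on $p_i=p_{i+1}$; your reading matches the paper's stated intent (``adjacent duplicate objects in the summary sequence are collapsed''), and in either reading the recursion still only ever keeps or increments $\rho$, so the argument goes through unchanged.
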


\begin{proof}
    Given an original sequence $\mathbf{p} = (p_1, p_2, ..., p_m)$ on $\mathscr{O}$, let us denote $\mathbf{p}' = (p_1', p_2', ..., p_n')$ as a sequence on $\mathscr{S}$ and is the reduced $\mathit{f}$-summarization of $\mathbf{p}$. Elements in $\mathbf{p}'$ can also be described as follow: $p_1' = p_1$ and $p_i'=\mathit{f}(min_{\iota_{\mathbf{p}}(\cdot)} \{p_j: p_j \in \mathbf{p}, j \geq i, p_j \neq p_{i-1}\})$, for $1 < i \leq m$ (i.e., $p_i'$ is the $\mathit{f}$ mapping of the first non-duplicate element since $p_{i-1}$).
  
    Let us consider $p_i$ and $p_j \in \mathbf{p}$, $1 \leq i < j \leq m$. There are three possibilities:
    \begin{itemize}
    \item $p_i = p_j$ and $p_k = p_i (\forall k: i < k < j)$: In this case, we have $\iota_{\mathbf{p}'}(p_j) = \iota_{\mathbf{p}'}(p_i)$ and $\iota_{\mathbf{p}'}(p_k) = \iota_{\mathbf{p}'}(p_i)$, $\forall k: i < k < j$.
    \item $p_i = p_j$ and $\exists k: i < k < j, p_k \neq p_i$: In this case, we have $\iota_{\mathbf{p}'}(p_i) < \iota_{\mathbf{p}'}(p_k)$ and $\iota_{\mathbf{p}'}(p_k) = \iota_{\mathbf{p}'}(p_j)$. As a result, $\iota_{\mathbf{p}'}(p_i) < \iota_{\mathbf{p}'}(p_j)$.
    \item $p_i \neq p_j$: Since $1 \leq i < j \leq m$, we have  $\iota_{\mathbf{p}'}(p_i) < \iota_{\mathbf{p}'}(p_j)$ according to the above definition of $\mathbf{p}'$.
    \end{itemize}
  
    In all of the above cases, $\iota_{\mathbf{p}'}(p_i) \leq \iota_{\mathbf{p}'}(p_j)$, and thus, $\mathbf{p}'$ preserves the sequential relationship between elements in $\mathbf{p}$.
\end{proof}

\section{Summarization}
\label{sec:summarization}

In this section, we formally present our proposed summarization schemes\footnote{\scriptsize Unless explicitly stated, in the remaining sections, a summarization will always refer to reduced summarization.}.


\subsection{Attribute-based Summarization}

To incorporate the multidimensional attributes of a sequence's data items, we first define the notion of \textit{attributes compatible mapping} that leverages a data item's attributes as a summarization criteria:

\begin{definition} 
Given a multidimensional set $\mathscr{O}=\langle\mathbb{O}, \mathbb{A}\rangle$ and a set of attributes $A \subseteq \mathbb{A}$, a mapping $\mathit{f}$ is defined as an $A$-compatible mapping if: $\forall p, q \in \mathscr{O}$, $\mathit{f}(p) = \mathit{f}(q)$ if and only if $\mathcal{A}_k(p) = \mathcal{A}_k(q), \forall \mathcal{A}_k \in A$.
\label{def:attr_mapping}
\end{definition}

Next, we define attribute-based summarization based on the definition of attributes compatible mapping:

\begin{definition} 
Given multidimensional set $\mathscr{O}=\langle\mathbb{O}, \mathbb{A}\rangle$ and a set of attributes $A \subset \mathbb{A}$, an $A$-based summarization is defined as a reduced $\mathit{f}$-summarization where the mapping $\mathit{f}$ is an $A$-compatible mapping on $\mathscr{O}$.
\label{def:attr_summary}
\end{definition}


\begin{figure}[htp]
	\centering
	\begin{minipage}[tp]{0.30\textwidth}
		\centering
		\includegraphics[width=1\textwidth]{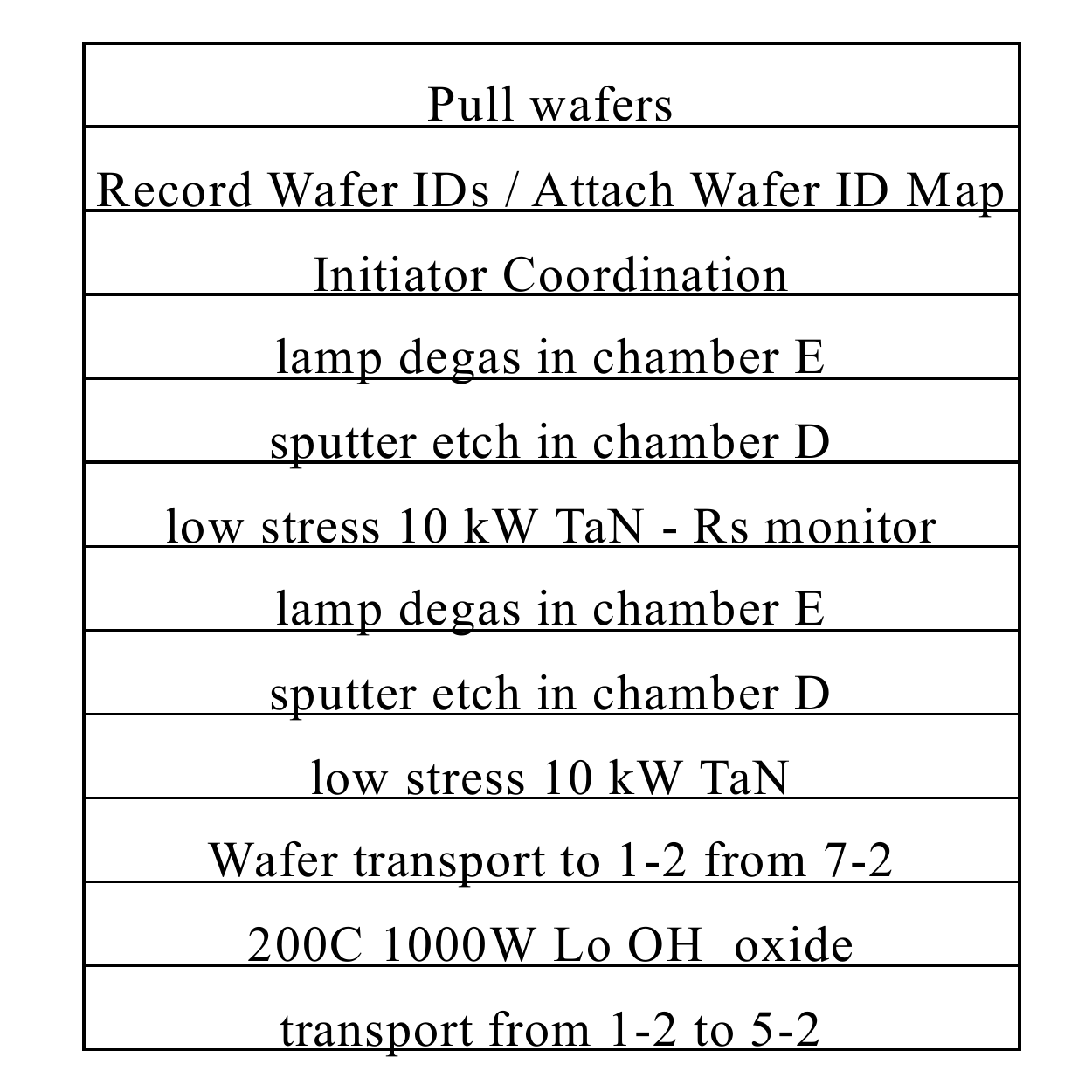}\subcaption{{\scriptsize \textit{Activity}-based}}\label{fig:example_summary_activity}
	\end{minipage}
	~
	\begin{minipage}[tp]{0.30\textwidth}
		\centering
		\includegraphics[width=0.5\textwidth]{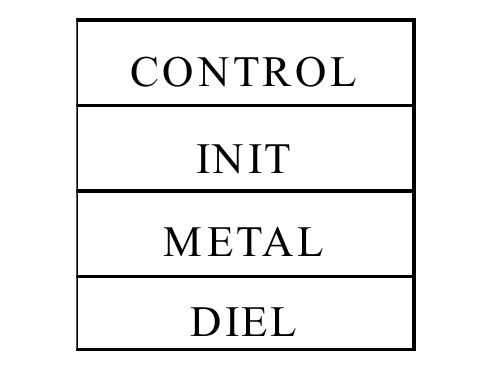} \subcaption{{\scriptsize\textit{Sector}-based}} \label{fig:example_summary_sector}
		\includegraphics[width=0.5\textwidth]{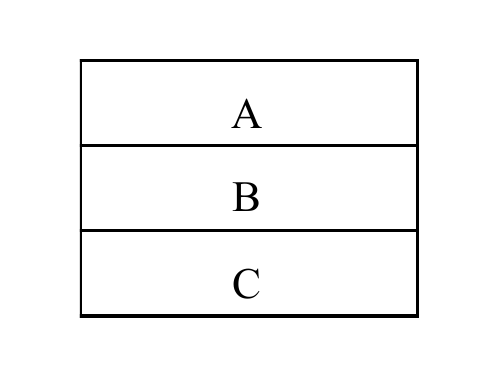}\subcaption{{{\scriptsize \textit{Responsible}-based}}}\label{fig:example_summary_responsible}
	\end{minipage}
	\caption{Different forms of attribute-based summarization of the trace in Example~\ref{fig:motivating_example2}.}
	\label{fig:example_summary}
	\vspace{-0.2in}
  \end{figure}



Attribute compatible summarization provides an intuitive way for users to choose attributes as a summarization criteria and produces summaries that are easy to interpret. It does not give users control over the average length of summarized sequences, which we refer to as \textit{resolution}. This is because attribute values are static and already defined with the original data. Figure~\ref{fig:example_summary} shows examples of different attribute-based summarizations of the trace in Example~\ref{fig:motivating_example2}:  \textit{Activity}-based (Figure~\ref{fig:example_summary_activity}), \textit{Sector}-based (Figure~\ref{fig:example_summary_sector}), and \textit{Responsible}-based (Figure~\ref{fig:example_summary_responsible}). The \textit{Activity}-based summary the has biggest resolution among the examples, while the \textit{Responsible}-based summary has smallest resolution (i.e., the most compact summary).

Since longer summarized sequences reduce the efficiency of sequence data analysis, and attribute-based summarization offers users little flexibility in controlling that efficiency, it would be desirable if users are empowered to make the trade-off between efficiency and accuracy of data analysis, especially when dealing with large data or data of high complexity. For example, in a sequence similarity search application, users might decide to tolerate a certain level of false positives in the results (e.g., 0.9 false positive rate) to trade-off for faster response (e.g., results are returned within 5 seconds). To address this issue, we introduce a novel summarization scheme that offers more flexibility and better control over the resolution of summaries, while still capturing semantic and sequential relationships of the original data as with attribute-based summarization.

\begin{figure}[htp]
	\centering
	\includegraphics[width=0.5\linewidth]{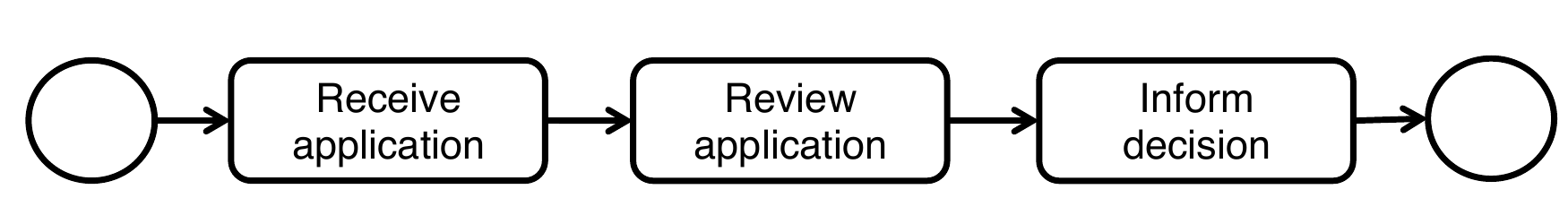}
  \caption{Topic-based representation of the process in Example 1.}
  \label{fig:topic_based_process}
  \vspace{-0.3in}
\end{figure}

\subsection{Topic-based Summarization}
We are motivated by the observation that business processes can often be represented by higher-level process models of fewer dimensions. Figure~\ref{fig:topic_based_process} shows an example of a more abstract process model of the one in Figure~\ref{fig:motivating_example1}. Each activity in Figure~\ref{fig:topic_based_process} corresponds to multiple activities in Figure~\ref{fig:motivating_example1}. We propose a \textit{topic-based summarization} technique that captures the many-to-one mapping from the original sequences to a \textit{topic-based summarization} of fewer dimensions, where each \textit{topic} is an abstract representation of a set of original dimensions. Since the topics are implicit from the original representation of sequences, we first perform dimensionality reduction on the original sequences to transform the original dimensions to topics. Then, we define the notion of topic-based summarization using the new representation.


Before applying dimension reduction techniques to the original sequences, it is important to have an appropriate data representation for sequences. We begin by selecting an attribute of the original sequences and transform multidimensional sequences to the appropriate attribute-based summarization. It is often intuitive to pick the attribute with the most number of dimensions as this attribute likely captures the most essential information about the objects in the original multidimensional set. For example, in Example~\ref{fig:motivating_example2}, \textit{Activity} is the attribute with the most number of dimensions and it is also the base attribute to represent sequences, while other attributes, such as \textit{Sector} and \textit{Responsible}, provide supporting information for \textit{Activity}.

We then represent each sequence $\mathbf{p}$ as a numeric vector $(\vartheta_1, \vartheta_2, ..., \vartheta_{|\mathcal{A}^*|})$, where $\mathcal{A}^*$ is the base attribute set that sequences are transformed to in the first step and $|\mathcal{A}^*|$ is the number of dimensions on $\mathcal{A}^*$. We measure $\vartheta_i$ for $\mathbf{p}$ in a way that captures both the local importance of each dimension and its specificity to a sequence. To capture the local importance, we use the frequency of the $i$-th dimension in $\mathbf{p}$, denoted as $\mathtt{tf}^{i}_{\mathbf{p}}$, that is defined by the number of items in $\mathbf{p}$ whose values equal the $i$-th dimension of $\mathcal{A}^*$, denoted as $a_{i}$. To capture the specificity, we use the popularity of a dimension across all sequences: $\mathtt{df}_{i} = |\{\mathbf{p} \in \mathbb{S} | a_{i} \in \mathbf{p}\}|$, where $\mathbb{S}$ is the set of all sequences. Intuitively, the higher $\mathtt{df}_{i}$ is, the more popular the $i$-th dimension is and thus, the less specificity it is to a sequence. The formulation of $\vartheta_i$ is as follows:

\begin{equation}
\scriptsize
\vartheta_{i} =
	\begin{cases}
		(1 + log(\mathtt{tf}^{i}_{\mathbf{p}})) \times log(\frac{|\mathbb{S}|}{\mathtt{df}_{i}}) & \text{if $a_{i} \in \mathbf{p}$}\\
		0 & \text{otherwise}
	\end{cases}
\label{eq:dimred}
\end{equation}

After representing sequences as vectors, the set of sequences $\mathbb{S}$ can be represented as a matrix $\mathbf{M}$, whose size is $|\mathbb{S}| \times |\mathcal{A}^*|$ where each row corresponds to a vector representation of a sequence in $\mathbb{S}$. With this matrix representation, we can apply off-the-shelf dimension reduction techniques on $\mathbf{M}$, such as non-negative matrix factorization (NMF), principle component analysis (PCA), or singular value decomposition (SVD), among others.  The results of these techniques can be presented as two matrices $\mathbf{M}'$ and $\mathbf{W}$. $\mathbf{M}$, whose size equals $|\mathbb{S}| \times k$ with $k$ being the number of new dimensions (i.e., $k = |\mathscr{S}|$), represents the original sequences on the summary space. $\mathbf{W}$, whose size equals $|\mathscr{O}| \times k$, represents the original dimensions on the new dimensions, or topics (i.e., each row is a vector representing the distribution of an original dimension over the set of new dimensions).


Based on the results of dimensionality reduction, we now need to produce a many-to-one mapping from the original dimensions to topics. Two dimensions $a_{i}, a_{j}$ in the original space are likely to be in the same topic if their corresponding vectors in $\mathbf{W}$ have high similarity (e.g., using Cosine similarity). In addition, $a_{i}$ and $a_{j}$ are likely to be in the same topic if they frequently appear next to each other in a sequence (i.e., they represent two closely related activities in the underlining process model). From these insights, we model the problem of finding an optimal many-to-one mapping from the original dimensions to topics as a constrained optimization problem:

\begin{equation}
\scriptsize
\begin{aligned}
& \underset{\mathit{f}}{\text{argmax}}
& & \lambda \cdot \sum_{\mathit{f}(a_i)=\mathit{f}(a_j)} \theta(a_i, a_j) + (1-\lambda) \cdot \sum_{(a_i, a_j)} \omega(a_i, a_j)\theta(a_i, a_j) \\
& \text{subject to}
& & \mathit{f}: \mathscr{O} \rightarrow \mathscr{S} \\
& & & \forall a_i, a_j \in \mathscr{O},  \text{if } \mathit{f}(a_i) \neq \mathit{f}(a_j), \text{then } a_i \neq a_j.\\
& & & |\mathscr{S}| = k.
\end{aligned}
\label{eq:topic_optimization}
\end{equation}

where $\theta(a_i, a_j)$ is the similarity between dimensions $a_i$ and $a_j$ based on their corresponding representation in $\mathbf{W}$, $\omega(a_i, a_j)$ is the number of times $a_i$ and $a_j$ appear next to each other in input sequence set $\mathbb{S}$, and $\lambda$ is used to bias towards similarity between dimensions or the number of adjacent appearances.

We now can formally define the notion of topic summarization as follows:

\begin{definition} (\textbf{\textit{$k$-Topic Summarization}})
A $k$-topic summarization of sequences from original multidimensional set $\mathscr{O}$ to a summary set $\mathscr{S}$ is defined as a reduced $f$-summarization, where the mapping $\mathit{f}$ is the solution of the optimization problem defined in (\ref{eq:topic_optimization}).
\label{def:topic_summary}
\end{definition}

Finding an efficient k-topic summarization is the crux of the problem. We  say ``efficient'' as opposed to optimal because our k-topic summarization problem is NP-hard 
(a variant of the set partitioning problem).Thus we resort to a ``greedy'' heuristic approach. Our approach is similar to that of the agglomerative clustering algorithm. It starts with treating each original dimension as a singleton cluster and then successively merges pairs of dimensions that are closest to each other until all clusters have been merged into a single cluster that contains all dimensions. This step creates a hierarchy where each leaf node is a dimension and the root is the single cluster of the last merge. Because we want a partition of disjoint $k$ clusters as the new dimensions, the next step is to cut the hierarchy at some point to obtain the desirable number of clusters. To find the cut, we use a simple approach that is based on finding a minimum similarity threshold so that the distance between any two dimensions in the same cluster is no more than that threshold and there are at most $k$ clusters. 




Figure~\ref{fig:topic_summarization} outlines the process to generate $k$-topic summarization of sequences. There are two steps that require input from users: the number of topics (i.e., dimensions) on the summary space, and semantic labels for discovered topics. These inputs can be used by users to control the resolution of the summary space, as well as to integrate user expertise into the summarization (and thus, to the analysis tasks).

\begin{figure}[tp]
  	\centering
  	\includegraphics[width=1\linewidth]{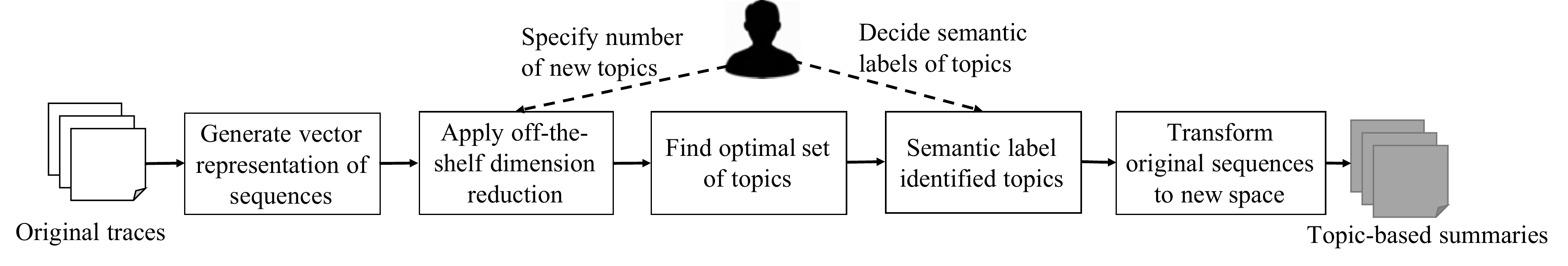}
	\caption{Topic summarization procedure.}
	\label{fig:topic_summarization}
	\vspace{-0.2in}
\end{figure}

\section{Error Model for Edit-Distance on Summaries}
\label{sec:errormodel}

We seek to answer the question of how to relate approximate results of analysis tasks on the summary space to those on the original space. Since similarity measure is an important operator in a lot of analysis tasks, such as similarity search and traces clustering, we focus on the relationship between the similarity of sequences on the summary space with that on the original space under edit-distance constraint: $\mathtt{ed}(\mathbf{p}, \mathbf{q})$ \& $\mathtt{ed}(\mathit{f}(\mathbf{p}), \mathit{f}(\mathbf{q}))$, where $\mathtt{ed}$ is the edit-distance function and $\mathit{f}$ is a summarization function. We select Edit-distance as the similarity measure because it captures both the structural similarity (i.e., whether two sequences consist of data items in similar order) and content-based similarity (i.e., whether two sequences share similar set of data items) between sequences. Furthermore, Edit-distance's results, presented as a chain of edit operators to transform a sequence to the other, can be easily interpreted by users, which makes it widely popular in practice.

In terms of the relationship between $\mathtt{ed}(\mathbf{p}, \mathbf{q})$ and $\mathtt{ed}(\mathbf{\mathit{f}(\mathbf{p})}, \mathbf{\mathit{f}(\mathbf{q})})$, we are interested in the contractive  and proximity preservation properties.

\begin{definition} 
Given a summarization $\mathit{f}$, we said that the edit-distance measure satisfies the contractive property on $\mathit{f}$ if $\mathtt{ed}(\mathbf{p}, \mathbf{q}) \geq \mathtt{ed}(\mathit{f}(\mathbf{p}), \mathit{f}(\mathbf{q})), \forall \mathbf{p}, \mathbf{q}$.
\end{definition}

The contractive property is particularly important for applications such as similarity search, because it guarantees that performing edit-distance based similarity search on the summary space using $\mathit{f}$ will yield results with 100\% recall~\cite{hjaltason2003properties}\cite{papapetrou2009reference}. Specifically, given a query sequence $\mathbf{p}$ and an edit-distance threshold $\chi$, the similarity search task needs to find all sequences in the sequence set $\mathbb{S}$ that have edit-distance with $\mathbf{p}$ smaller or equal than $\chi$: $\mathbb{S}^{*} = \{\mathbf{q} \in \mathbb{S} | \mathtt{ed}(\mathbf{p}, \mathbf{q}) \leq \chi\}$. If the contractive property holds for a summarization $\mathit{f}$, we can avoid expensive calculation of edit-distance on the original space by finding all sequences $\mathbf{q}$ that satisfy the threshold $\chi$ on the summary space: $\bar{\mathbb{S}} = \{\mathbf{q} \in \mathbb{S} | \mathtt{ed}(\mathit{f}(\mathbf{p}), \mathit{f}(\mathbf{q})) \leq \chi\}$. Because if $\mathtt{ed}(\mathbf{p}, \mathbf{q}) \leq \chi$, then $\mathtt{ed}(\mathit{f}(\mathbf{p}), \mathit{f}(\mathbf{q})) \leq \chi$; we can guarantee that if $\mathbf{q} \in \mathbb{S}^{*}$, then $\mathbf{q} \in \bar{\mathbb{S}}$ (i.e., 100\% recall).

\begin{definition} 
Given a summarization $\mathit{f}$, we said that the edit-distance measure satisfies the proximity preservation property on $\mathit{f}$ if $\mathtt{ed}(\mathbf{p}, \mathbf{q}) \geq \mathtt{ed}(\mathbf{p}, \mathbf{r})$, then $\mathtt{ed}(\mathit{f}(\mathbf{p}), \mathit{f}(\mathbf{q})) \geq \mathtt{ed}(\mathit{f}(\mathbf{p}), \mathit{f}(\mathbf{r})), \forall \mathbf{p}, \mathbf{q}, \mathbf{r}$.
\end{definition}

The proximity preservation property is particularly important for applications such as traces clustering that group similar traces into the same cluster. This is because the proximity preservation property guarantees that traces that are similar in the original space are also similar in the summary space. Thus, the clustering results on the summary space will likely be similar to those on the original space.


While the contractive property does not hold \textit{in general} for edit-distance between summarized sequences, we show that it holds under certain circumstances. The first of such circumstances is when the summarization $\mathit{f}$ is a \textit{non-reduced many-to-one}:




\begin{theorem}
If $\mathit{f}$ is a non-reduced many-to-one summarization on $\mathscr{O}$, as defined in Definition~\ref{def:fsummary}, then we have:  $\mathtt{ed}(\mathbf{p}, \mathbf{q}) \geq \mathtt{ed}(\mathit{f}(\mathbf{p}), \mathit{f}(\mathbf{q}))$, $\forall \mathbf{p}, \mathbf{q}$ on $\mathscr{O}$.
\label{theorem:nonreduced}
\end{theorem}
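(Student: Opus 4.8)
The plan is to use the standard characterization of edit-distance as the minimum cost over all edit scripts — sequences of insertions, deletions, and substitutions — and to show that applying $\mathit{f}$ element-wise to an optimal script for the original pair yields a valid script for the summarized pair whose cost is no larger.

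First I would record the structural fact about non-reduced summarization that makes everything work: by Definition~\ref{def:fsummary}, $\mathit{f}(\mathbf{p}) = (\mathit{f}(p_1), \dots, \mathit{f}(p_m))$ has exactly the same length as $\mathbf{p}$, with no collapsing of adjacent duplicates and with indices preserved. Consequently an \emph{alignment} between $\mathbf{p}$ and $\mathbf{q}$ — a monotone partial matching of positions, where unmatched positions on the left are deletions, unmatched positions on the right are insertions, and a matched pair $(i,j)$ is a match when $p_i = q_j$ and a substitution otherwise — is in one-to-one correspondence with an alignment between $\mathit{f}(\mathbf{p})$ and $\mathit{f}(\mathbf{q})$ on the \emph{same} index sets.

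Next I would take an optimal alignment $\mathcal{M}$ realizing $\mathtt{ed}(\mathbf{p}, \mathbf{q})$ and push it through $\mathit{f}$. Each deletion or insertion maps to the deletion or insertion of its $\mathit{f}$-image and keeps unit cost. For a matched pair $(i,j)$: if $p_i = q_j$ then $\mathit{f}(p_i) = \mathit{f}(q_j)$, so a match stays a match of cost $0$; if $p_i \neq q_j$ (a substitution of cost $1$ in the original), then either $\mathit{f}(p_i) = \mathit{f}(q_j)$, turning it into a match of cost $0$, or $\mathit{f}(p_i) \neq \mathit{f}(q_j)$, keeping it a substitution of cost $1$. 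In every case the per-operation cost does not increase, so the image alignment has total cost at most $\mathtt{ed}(\mathbf{p}, \mathbf{q})$. Since $\mathtt{ed}(\mathit{f}(\mathbf{p}), \mathit{f}(\mathbf{q}))$ is the minimum cost over all alignments of $\mathit{f}(\mathbf{p})$ and $\mathit{f}(\mathbf{q})$, we conclude $\mathtt{ed}(\mathit{f}(\mathbf{p}), \mathit{f}(\mathbf{q})) \leq \mathtt{ed}(\mathbf{p}, \mathbf{q})$, as required.

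I do not expect a serious obstacle; the only point needing care is to state the edit-distance model precisely enough that ``apply $\mathit{f}$ to the alignment'' is manifestly well-defined — and this is exactly where non-reducedness (equal lengths, preserved indices) is used, which is also why the argument fails for reduced summarizations, where collapsing changes lengths and destroys the alignment structure. A fully mechanical alternative would be induction on $|\mathbf{p}| + |\mathbf{q}|$ via the dynamic-programming recurrence for $\mathtt{ed}$, comparing the recurrence for the originals with that for the images term by term: the insertion and deletion terms match exactly, and the substitution term can only drop — from $1$ to $0$ — when $\mathit{f}(p_{|\mathbf{p}|}) = \mathit{f}(q_{|\mathbf{q}|})$ while $p_{|\mathbf{p}|} \neq q_{|\mathbf{q}|}$.
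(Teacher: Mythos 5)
Your proposal is correct, and your primary argument takes a genuinely different route from the paper's. The paper proves the theorem by comparing the Wagner--Fischer dynamic-programming tables cell by cell: it argues that at every step $(i,j)$ the original recurrence incurs at least as much edit cost as the summarized one, hence every entry of the original DP matrix dominates the corresponding entry of the summarized matrix --- essentially the ``fully mechanical alternative'' you sketch in your last sentence. Your main argument is instead a witness construction: take an optimal alignment realizing $\mathtt{ed}(\mathbf{p},\mathbf{q})$, push it through $\mathit{f}$ (which is well-defined precisely because a non-reduced summarization preserves lengths and indices), observe that no per-operation cost increases (a substitution can only stay a substitution or collapse to a cost-$0$ match when $\mathit{f}(p_i)=\mathit{f}(q_j)$), and invoke minimality of $\mathtt{ed}(\mathit{f}(\mathbf{p}),\mathit{f}(\mathbf{q}))$ over all alignments. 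The witness argument buys you a cleaner and more self-contained proof: it needs only the ``minimum cost over all edit scripts'' characterization and sidesteps the inductive bookkeeping over the DP table (the paper's version, as written, also slightly oversimplifies the recurrence in the match case, stating $\mathtt{ed}_{ij}=\mathtt{ed}_{i-1,j-1}$ rather than the three-way minimum). The DP comparison, on the other hand, makes explicit that domination holds entry-wise throughout the table, which is marginally more information than the single inequality on the final entry, though nothing in the paper uses that extra strength.
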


\begin{proof}
Let us assume that $\mathbf{p}=(p_1, p_2, ..., p_m)$, $ \mathbf{q} = (q_1, q_2, ...q_n)$. For compact representation, we denote $\mathtt{ed}(\mathbf{p}, \mathbf{q})$ as $\mathtt{ed}$ and $\mathtt{ed}(\mathit{f}(\mathbf{p}), \mathit{f}(\mathbf{q}))$ as $\mathtt{ed}'$.

As part of the recursive Wagner-Fischer algorithm to calculate edit-distance between two sequences $\mathbf{p}$ and $\mathbf{q}$, let us consider the step that involves comparing two data items $p_i \in \mathbf{p}$ and $q_j \in \mathbf{q}$ ($1 \leq i \leq m$, $1 \leq j \leq n$). If we denote the edit-distance at the current step as $\mathtt{ed}_{ij}$ and $\mathtt{ed}'_{ij}$ (for edit-distance on summary space), based on the recursive formula of the Wagner-Fischer algorithm, we have:

If $p_i = q_j$, then we have $\mathtt{ed}_{ij} = \mathtt{ed}_{i-1,j-1}$. Because of the many-to-one summarization $\mathit{f}$, $\mathit{f}(p_i) = \mathit{f}(q_j)$. Hence, $\mathtt{ed}'_{ij} = \mathtt{ed}'_{i-1,j-1}$. So, both $\mathtt{ed}_{ij}$ and $\mathtt{ed}'_{ij}$ \textit{do not require any edit cost} in this case.

If $p_i \neq q_j$, then we have $\mathtt{ed}_{ij} = min(\mathtt{ed}_{i-1,j}+1, \mathtt{ed}_{i,j-1}+1, \mathtt{ed}_{i-1,j-1}+1)$. Because of the many-to-one summarization $\mathit{f}$, we have either $\mathit{f}(p_i) = \mathit{f}(q_j)$ or $\mathit{f}(p_i) \neq \mathit{f}(q_j)$. Thus, $\mathtt{ed}'_{ij} = min(\mathtt{ed}'_{i-1,j}+1, \mathtt{ed}'_{i,j-1}+1, \mathtt{ed}'_{i-1,j-1}+1)$ if $\mathit{f}(p_i) \neq \mathit{f}(q_j)$ (i.e., one edit cost), or $\mathtt{ed}'_{ij} = \mathtt{ed}'_{i-1,j-1}$ if $\mathit{f}(p_i) = \mathit{f}(q_j)$ (i.e., no edit cost). So, in this case, $\mathtt{ed}_{ij}$ \textit{requires one edit cost}, while $\mathtt{ed}'_{ij}$ \textit{requires either one or zero edit cost}.

Therefore, we always have $\mathtt{eq}_{ij} \geq \mathtt{eq}'_{ij}, \forall i, j$. Since the values $\{\mathtt{eq}_{ij}\}$ and $\{\mathtt{eq}'_{ij}\}$ form the matrix used by recursive algorithm to calculate $\mathtt{ed}(\mathbf{p}, \mathbf{q})$ and $\mathtt{ed}(\mathit{f}(\mathbf{p}), \mathit{f}(\mathbf{q}))$ respectively, then we have $\mathtt{ed}(\mathbf{p}, \mathbf{q}) \geq \mathtt{ed}(\mathit{f}(\mathbf{p}), \mathit{f}(\mathbf{q}))$.
\end{proof}

Consider the case when $\mathit{f}$ is a \textit{reduced many-to-one summarization}.
We are able to derive rules to indicate whether the contractive property holds or does not hold for edit-distance of a particular pair of sequences $\mathbf{p}, \mathbf{q}$ using summarization $\mathit{f}$:

\begin{theorem}
Given two sequences $\mathbf{p}, \mathbf{q}$ in the original space $\mathscr{O}$, if $\mathit{f}$ is a reduced many-to-one summarization on $\mathscr{O}$, as defined in Definition~\ref{def:reducedfsummary}, then:

\begin{itemize}
\item If $\Gamma_{\mathbf{p}, \mathbf{q}} \geq \Lambda_{\mathit{f}(\mathbf{p}), \mathit{f}(\mathbf{q})}$, then we have $\mathtt{ed}(\mathbf{p}, \mathbf{q}) \geq \mathtt{ed}(\mathit{f}(\mathbf{p}), \mathit{f}(\mathbf{q}))$; or edit-distance on summary space by $\mathit{f}$ \textit{\textbf{satisfies}} the contractive property.
\item If $\Gamma_{\mathit{f}(\mathbf{p}), \mathit{f}(\mathbf{q})} > \Lambda_{\mathbf{p}, \mathbf{q}}$, then we have $\mathtt{ed}(\mathbf{p}, \mathbf{q}) < \mathtt{ed}(\mathit{f}(\mathbf{p}), \mathit{f}(\mathbf{q}))$; or edit-distance on summary space by $\mathit{f}$ \textit{\textbf{does not}} satisfy the contractive property.
\end{itemize}
where $\Lambda_{\mathbf{p}, \mathbf{q}} = max(|\mathbf{p}|, |\mathbf{q}|)$ and $\Gamma_{\mathbf{p}, \mathbf{q}} = ||\mathbf{p}|-|\mathbf{q}||$, with $|\mathbf{p}|$ being the length of $\mathbf{p}$.
\label{theorem:reduced}
\end{theorem}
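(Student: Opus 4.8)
The plan is to reduce the statement to two elementary and well-known bounds on the edit distance of an arbitrary pair of sequences and then simply chain them. For any two sequences $\mathbf{a}, \mathbf{b}$ one has
\[
\bigl||\mathbf{a}| - |\mathbf{b}|\bigr| \;\leq\; \mathtt{ed}(\mathbf{a}, \mathbf{b}) \;\leq\; \max(|\mathbf{a}|, |\mathbf{b}|).
\]
The lower bound holds because an insertion or a deletion changes a sequence's length by exactly one while a substitution leaves it unchanged, so any edit script from $\mathbf{a}$ to $\mathbf{b}$ must contain at least $\bigl||\mathbf{a}|-|\mathbf{b}|\bigr|$ insertions or deletions; one can also read this directly off the Wagner--Fischer recurrence, as in the proof of Theorem~\ref{theorem:nonreduced}. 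The upper bound holds because the explicit script that substitutes the first $\min(|\mathbf{a}|,|\mathbf{b}|)$ positions and then inserts or deletes the remaining $\bigl||\mathbf{a}|-|\mathbf{b}|\bigr|$ positions uses exactly $\max(|\mathbf{a}|,|\mathbf{b}|)$ operations. In the notation of the theorem this says $\Gamma_{\mathbf{a},\mathbf{b}} \leq \mathtt{ed}(\mathbf{a},\mathbf{b}) \leq \Lambda_{\mathbf{a},\mathbf{b}}$, and I would invoke it twice: once for the original pair $(\mathbf{p},\mathbf{q})$ and once for the summarized pair $(\mathit{f}(\mathbf{p}),\mathit{f}(\mathbf{q}))$, which is legitimate since a reduced $\mathit{f}$-summarization is itself a sequence over $\mathscr{S}$.

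With these in hand the two bullets are immediate. For the first, assume $\Gamma_{\mathbf{p},\mathbf{q}} \geq \Lambda_{\mathit{f}(\mathbf{p}),\mathit{f}(\mathbf{q})}$; then
\[
\mathtt{ed}(\mathbf{p},\mathbf{q}) \;\geq\; \Gamma_{\mathbf{p},\mathbf{q}} \;\geq\; \Lambda_{\mathit{f}(\mathbf{p}),\mathit{f}(\mathbf{q})} \;\geq\; \mathtt{ed}(\mathit{f}(\mathbf{p}),\mathit{f}(\mathbf{q})),
\]
using the lower bound on the original pair, the hypothesis, and the upper bound on the summarized pair. For the second, assume $\Gamma_{\mathit{f}(\mathbf{p}),\mathit{f}(\mathbf{q})} > \Lambda_{\mathbf{p},\mathbf{q}}$; symmetrically,
\[
\mathtt{ed}(\mathit{f}(\mathbf{p}),\mathit{f}(\mathbf{q})) \;\geq\; \Gamma_{\mathit{f}(\mathbf{p}),\mathit{f}(\mathbf{q})} \;>\; \Lambda_{\mathbf{p},\mathbf{q}} \;\geq\; \mathtt{ed}(\mathbf{p},\mathbf{q}),
\]
which gives the claimed strict inequality and hence the failure of contractivity for this pair.

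There is no real obstacle; the only point needing care is a clean justification of the two generic edit-distance bounds, after which the argument is a two-line chain in each case. I would additionally include a short remark explaining why these hypotheses are needed in the reduced case although Theorem~\ref{theorem:nonreduced} needed none: collapsing adjacent duplicates can only shorten a sequence and may shorten $\mathbf{p}$ and $\mathbf{q}$ by different amounts, so the length gap $\Gamma_{\mathit{f}(\mathbf{p}),\mathit{f}(\mathbf{q})}$ of the summarized pair can exceed that of the original pair—and when it exceeds $\Lambda_{\mathbf{p},\mathbf{q}}$ we land precisely in the regime where contractivity provably fails. It is also worth noting that the two bullets are not exhaustive: if $\Lambda_{\mathit{f}(\mathbf{p}),\mathit{f}(\mathbf{q})} > \Gamma_{\mathbf{p},\mathbf{q}}$ and $\Gamma_{\mathit{f}(\mathbf{p}),\mathit{f}(\mathbf{q})} \leq \Lambda_{\mathbf{p},\mathbf{q}}$ hold simultaneously, neither inequality chain applies and the ordering of the two edit distances is not determined by lengths alone.
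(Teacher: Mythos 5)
Your proof is correct and follows essentially the same route as the paper's: both arguments rest on the generic bounds $\Gamma_{\mathbf{a},\mathbf{b}} \leq \mathtt{ed}(\mathbf{a},\mathbf{b}) \leq \Lambda_{\mathbf{a},\mathbf{b}}$ and then chain them exactly as you do. Your version is in fact more complete, since you justify the two bounds explicitly and note that the two cases are not exhaustive, whereas the paper states the bounds without proof.
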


\begin{proof}
    This theorem can be easily proven by noticing that $\Lambda$ and $\Gamma$ in fact define the upper bound and lower bound on the edit-distance of a pair of sequences.

    The first rule is proven by using the chain rule of inequality: $\mathtt{ed}(\mathbf{p}, \mathbf{q}) \geq \Gamma_{\mathbf{p}, \mathbf{q}} \geq \Lambda_{\mathit{f}(\mathbf{p}), \mathit{f}(\mathbf{q})} \geq \mathtt{ed}(\mathit{f}(\mathbf{p}), \mathit{f}(\mathbf{q}))$.

    Similarly, for the second rule: $\mathtt{ed}(\mathit{f}(\mathbf{p}), \mathit{f}(\mathbf{q})) \geq \Gamma_{\mathit{f}(\mathbf{p}), \mathit{f}(\mathbf{q})} > \Lambda_{\mathbf{p}, \mathbf{q}} \geq \mathtt{ed}(\mathbf{p}, \mathbf{q})$.
\end{proof}




While Theorem~\ref{theorem:reduced} does not cover all cases, we show empirically that the number of sequence pairs whose edit-distances on reduced summarization violate the contractive property is very small. Thus, it has a high recall for similarity search task when using reduced many-to-one summarization.

For the \textit{proximity preservation property}, 
we are able to show in our evaluation that the edit distance-based traces clustering results in the summary space have comparable accuracy, compared with those in the original space, while having better efficiency. This implies that the proximity relationship is well-preserved in the summary space under edit-distance constraint.

\section{Evaluation}
\label{sec:evaluation}

We demonstrate the utility of our \textsc{Summarized} framework by evaluating its effectiveness and efficiency on two analysis tasks: trace similarity search and traces clustering.



{\noindent \textbf{\textit{Datasets:}}} We use three datasets from different domains: the $\mathtt{Lithography}$ dataset (596 traces with 1066 types of activities, each activity has multi-dimensional attributes) that contains traces generated from the executions of a semiconductor manufacturing process, the $\mathtt{BPIC}$ 2015 dataset (1199 traces with 289 activity types) that contains process traces of building permit applications, and a $\mathtt{BANK}$ dataset (2000 traces with 113 activity types) that consists of synthetically generated logs that represent a large bank transaction process\footnote{\scriptsize The $\mathtt{Lithography}$ dataset is provided by IBM and is private. The other datasets are available at \url{https://data.4tu.nl/repository/collection:all}.}. We run our evaluation on a computer with 16GB of RAM and a 2.7GHz quad-core Intel Core i7 CPU.


{\noindent \textbf{\textit{Summarization schemes:}}} We compare results of analysis tasks in the summary space using our proposed summarization schemes (i.e., $\mathtt{Topic}$ and $\mathtt{Attribute}$), $\mathtt{Random}$ summarization
, which randomly maps an original dimension to a new dimension in the summary space, and with the analysis results on the original space. Although $\mathit{Random}$-based summaries lack interpretability, as shown in \cite{chen2009mining}, a random summarization scheme on sequence graph can yield good results. We vary the number of dimensions $k$ in the summary space used by $\mathtt{Random}$ and $\mathtt{Topic}$ and vary the attributes used by $\mathtt{Attribute}$.


\begin{figure}[tp] \centering \scriptsize
\addtolength{\tabcolsep}{4pt}  
  \begin{tabular}{|r|c|c|c|c|c|c|}
  \hline
   & k=2 & k=5 & k=10 & k=20 & k=50 & k=100 \\
  \hline
  \hline
  Topic
  	& 0.000\%
  	& 0.003\%
  	& 0.006\%
  	& 0.007\%
  	& 0.010\%
  	& 0.014\% 
  	\\
  Random
  	& 0.002\%
  	& 0.010\%
  	& 0.007\%
  	& 0.021\%
  	& 0.027\%
  	& 0.033\%
  	\\
  \hline
  \end{tabular} 
\addtolength{\tabcolsep}{-4pt}  
\caption{Similarity false negatives: percentage of sequence pairs in the $\mathtt{Lithography}$ dataset where edit-distance in the summary space violates the contractive property. There are over 177,000 total sequence pairs in the dataset.}
\label{fig:contractive_violations}
\end{figure}

{\noindent \textbf{\textit{Evaluation metrics for the similarity search task:}}} The contractive property holds for most of the cases, as seen in Figure~\ref{fig:contractive_violations} which shows the percentage of sequence pairs in the $\mathtt{Lithography}$ dataset, out of over 177,000 pairs, whose edit-distances violate the contractive property in the summary space using $Topic$ and $Random$ summarization over different number of summary dimensions $k$. Since the recall rate is high, we focus on the false positive rate of the similarity search results. Given an edit distance threshold $\chi$, this metric tells us that, out of all sequence pairs that satisfy $\mathtt{ed}(\mathit{f}(\mathbf{p}), \mathit{f}(\mathbf{q})) \leq \chi$ on the summary space, how many of them actually satisfy the threshold in the original space: $\mathtt{ed}(\mathbf{p}, \mathbf{q}) \leq \chi$.


\smallbreak

{\noindent \textbf{\textit{Effectiveness of summarization schemes on similarity search:}}  Figure~\ref{fig:fp_all} shows the effectiveness of different summarization schemes on the similarity search task for the $\mathtt{Lithography}$, $\mathtt{BPIC}$, and $\mathtt{BANK}$ datasets\footnote{\scriptsize We only evaluate $\mathtt{Attribute}$ summarization on the $\mathtt{Lithography}$ dataset because this dataset's attributes provide better semantics compared with the ones in $\mathtt{BPIC}$ and $\mathtt{BANK}$.}. The y-axis reports the false positive results, while the x-axis corresponds to different edit-distance thresholds.  As expected (Figure~\ref{fig:litho_fp_random}, \ref{fig:litho_fp_topic}, \ref{fig:bpic_fp_random}, \ref{fig:bpic_fp_topic}, \ref{fig:bank_fp_topic}, \ref{fig:bank_fp_random}), the higher the number of dimensions in the summary space (denoted by $k$), the better the result (i.e., lower false positive rates). That is because, with more dimensions in the summary space, summaries of sequences more resemble the original sequences. Thus, there is little difference between edit-distances on the summary space and in the original space (hence, lower false positive rate).

\begin{figure}
	\centering
    \begin{subfigure}[t]{0.31\linewidth}
        \centering
  		\includegraphics[width=1\linewidth]{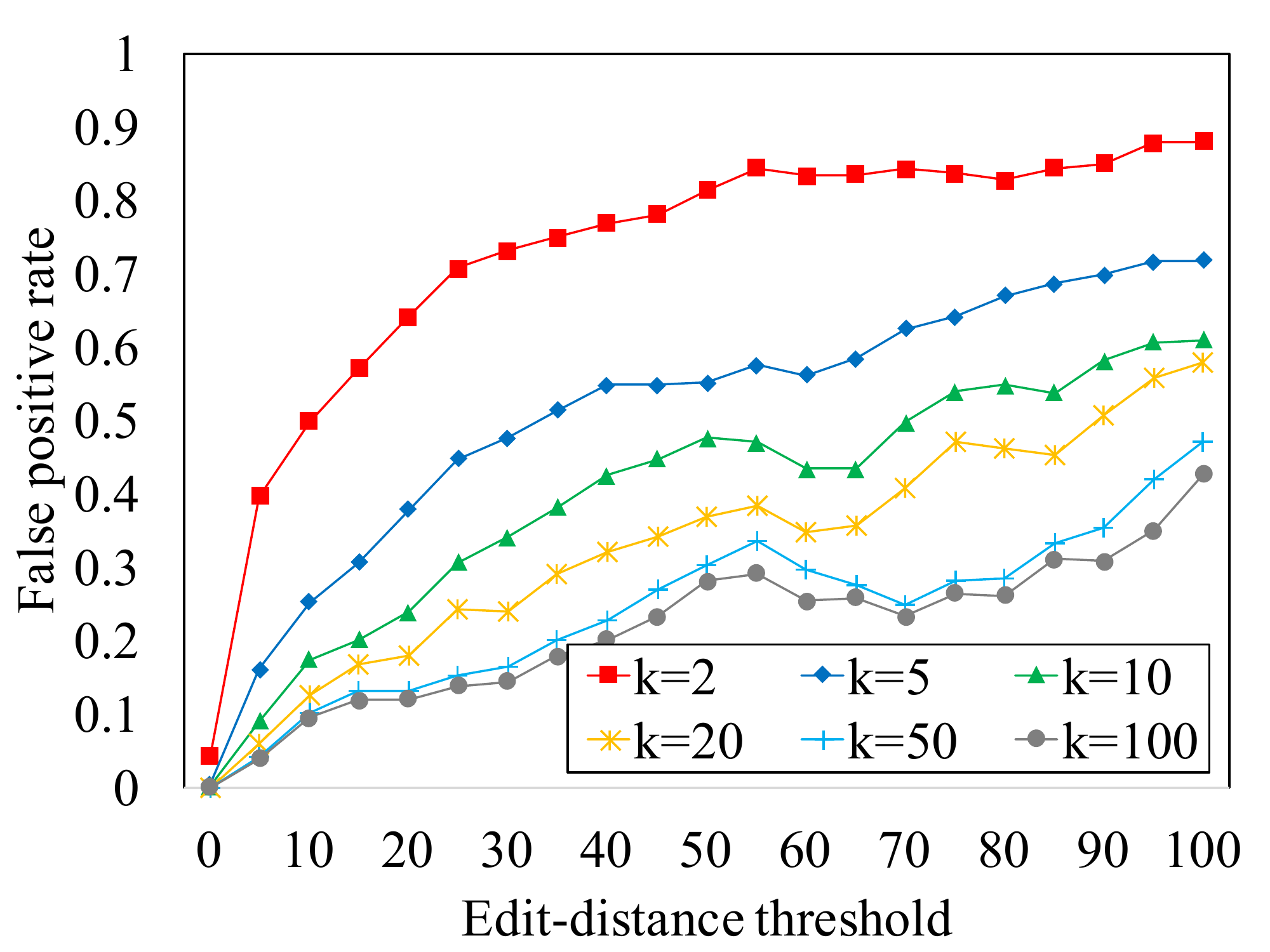}
		\caption{\scriptsize {Random ($\mathtt{Lithography}$)}}
		\label{fig:litho_fp_random}
    \end{subfigure}
    ~
    \begin{subfigure}[t]{0.31\linewidth}
        \centering
        \includegraphics[width=1\textwidth]{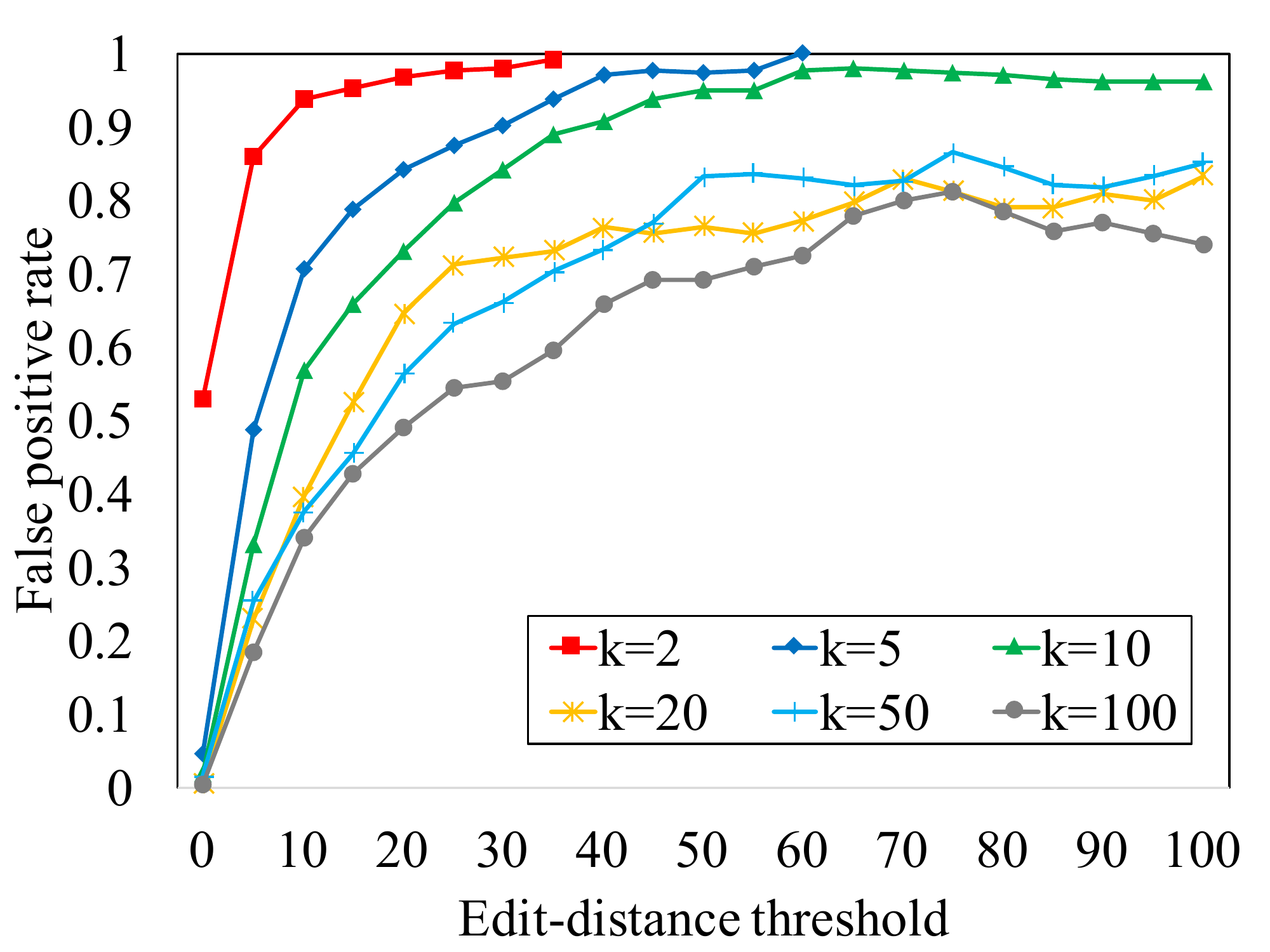}
        \caption{\scriptsize {Topic ($\mathtt{Lithography}$)}}
		\label{fig:litho_fp_topic}
    \end{subfigure}
    ~
    \begin{subfigure}[t]{0.31\linewidth}
        \centering
        \includegraphics[width=1\textwidth]{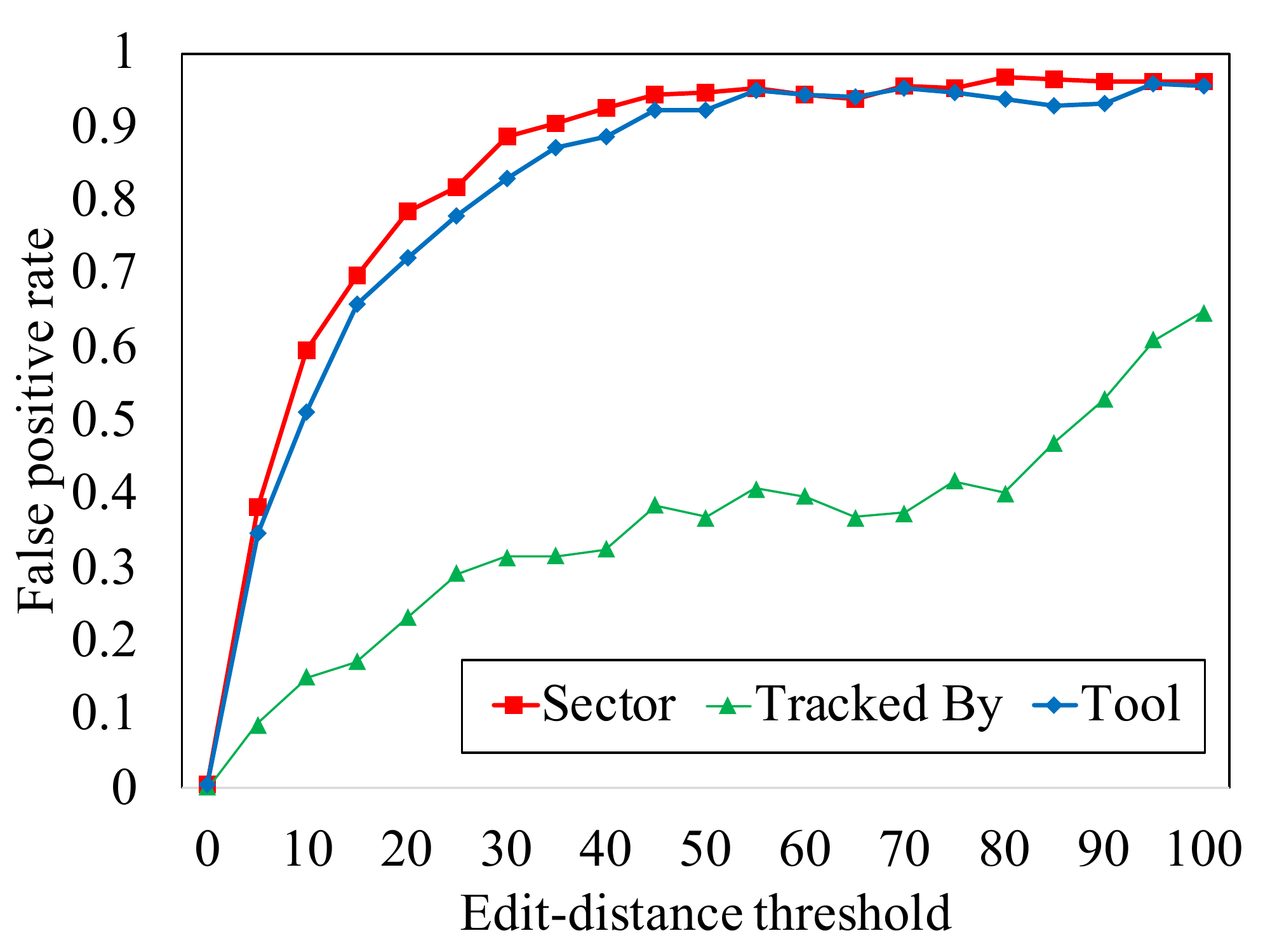}
        \caption{\scriptsize {Attribute ($\mathtt{Lithography}$)}}
		\label{fig:litho_fp_attribute}
    \end{subfigure}
    ~
    \begin{subfigure}[t]{0.23\linewidth}
        \centering
  		\includegraphics[width=1\linewidth]{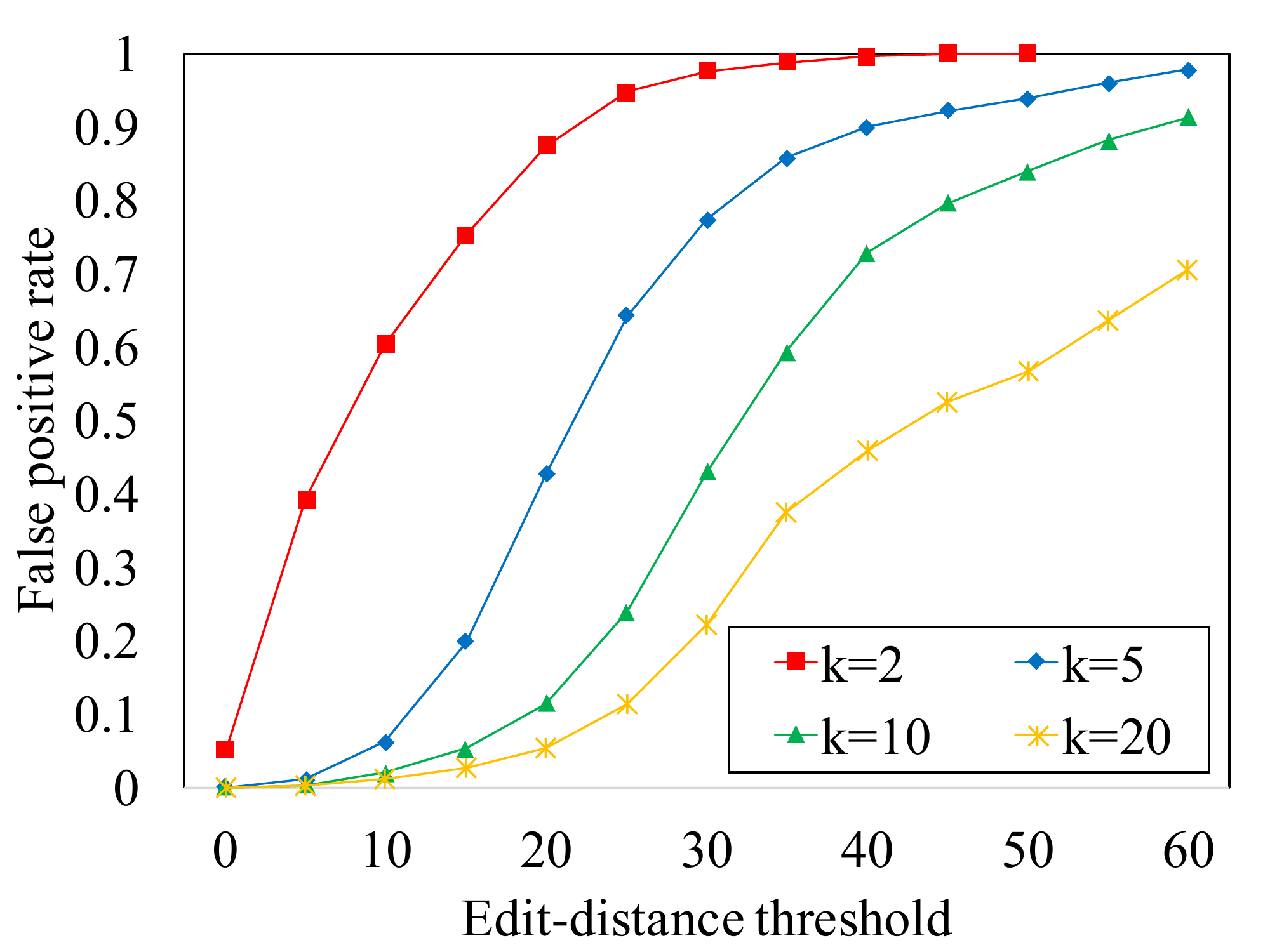}
		\caption{\scriptsize {Random ($\mathtt{BPIC}$)}}
		\label{fig:bpic_fp_random}
    \end{subfigure}
    ~
    \begin{subfigure}[t]{0.23\linewidth}
        \centering
        \includegraphics[width=1\textwidth]{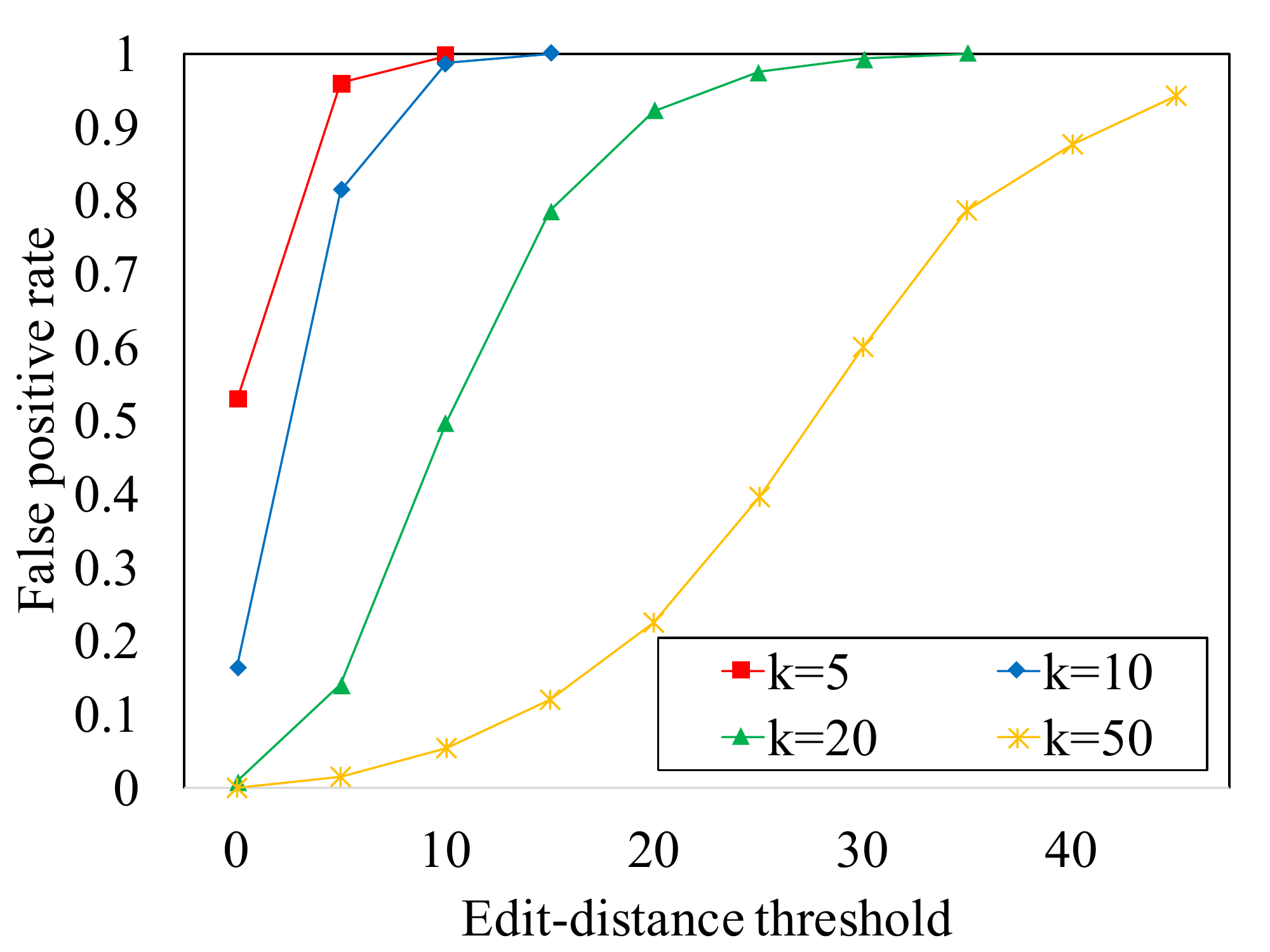}
        \caption{\scriptsize {Topic ($\mathtt{BPIC}$)}}
		\label{fig:bpic_fp_topic}
    \end{subfigure}
    ~
    \begin{subfigure}[t]{0.23\linewidth}
        \centering
        \includegraphics[width=1\linewidth]{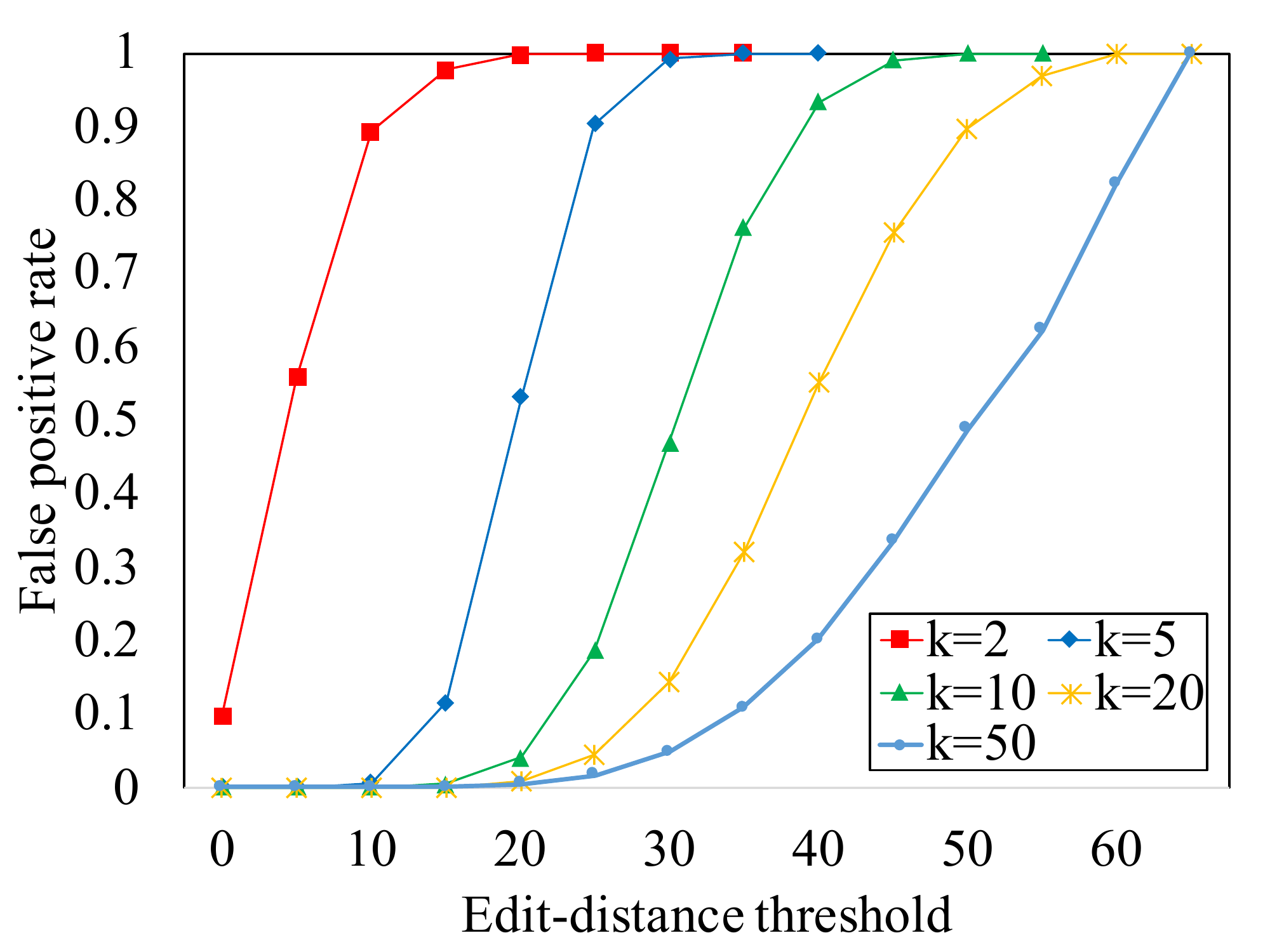}
      \caption{\scriptsize {Random ($\mathtt{BANK}$)}}
      \label{fig:bank_fp_random}
    \end{subfigure}
     ~
     \begin{subfigure}[t]{0.23\linewidth}
  	 	\centering
	   	\includegraphics[width=1\linewidth]{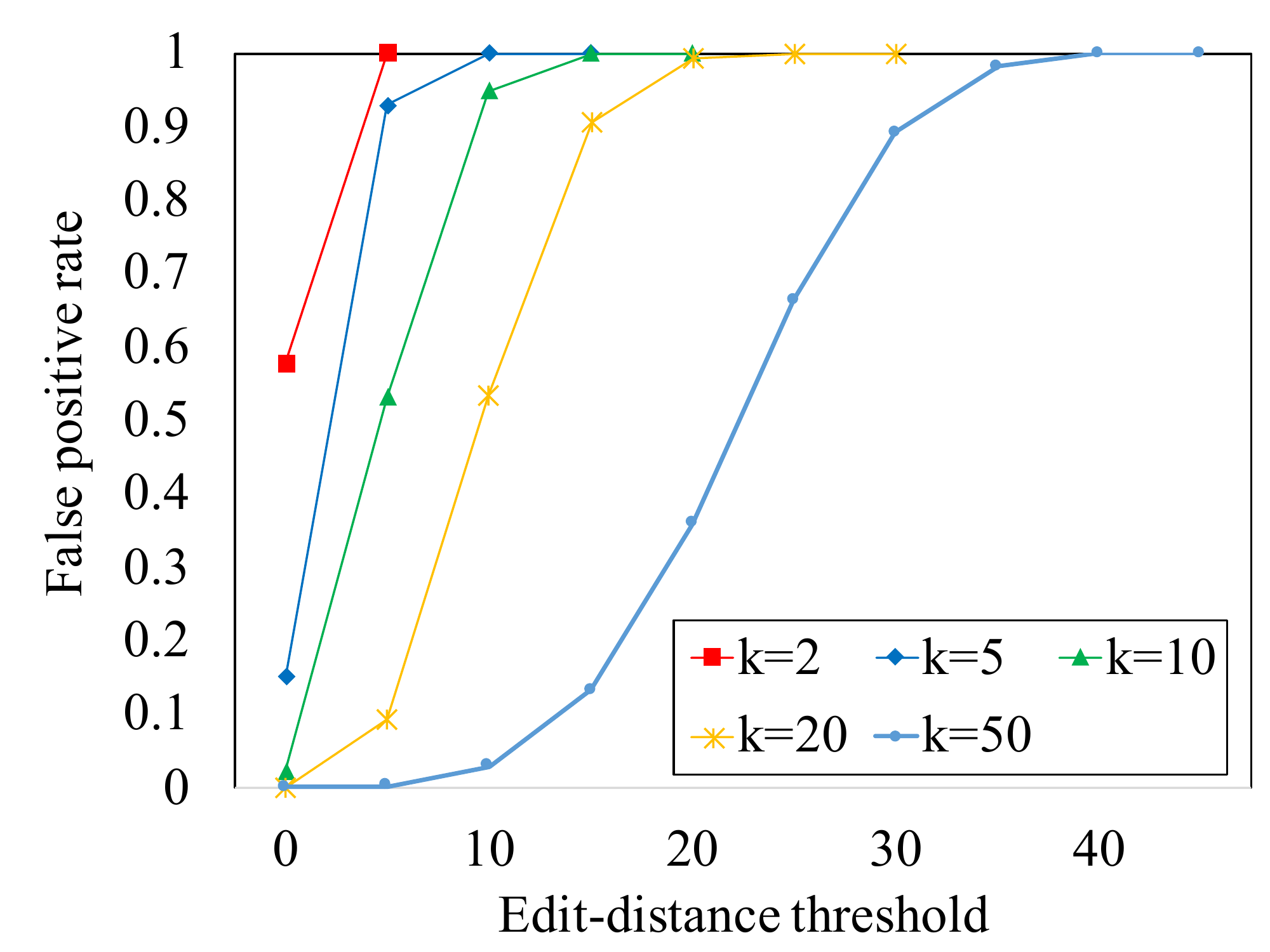}
	 	\caption{\scriptsize{Topic ($\mathtt{BANK}$)}}
	 	\label{fig:bank_fp_topic}
	 \end{subfigure}
	 \caption{False positive rates by different summarization schemes on similarity search task using the $\mathtt{Lithography}$, $\mathtt{BPIC}$, and $\mathtt{BANK}$ datasets.}
  \label{fig:fp_all}
  \vspace{-0.1in}
\end{figure}

When comparing the results of different summarization schemes on the same number of dimensions, $\mathtt{Random}$ outperforms $\mathtt{Topic}$ summarization (at the cost of interpretability of summaries and the efficiency, as we will show later). For $\mathtt{Attribute}$ (Figure~\ref{fig:litho_fp_attribute}), since we do not have control over the number of dimensions (since it depends on the attribute data), the quality of the results also depend on the chosen attribute. Specifically, the $Tracked By$\footnote{\scriptsize Three main attributes of an activity are used on the $\mathtt{Lithography}$ data: $Tracked By$ represents the person in charged of the activity; $Sector$ represents the area/department where the activity is taken, and $Tool$ represents the tool used to perform the activity.} attribute outperforms $Sector$ and $Tool$. This is in part because there are more dimensions on $Tracked By$'s summary space, and thus the summaries on the $Tracked By$ space more resemble the original sequences. $Sector$ and $Tool$ produce similar results, since similar $Tool$s are often used in the same $Sector$.

\smallbreak
{\noindent \textbf{\textit{Efficiency of summarization schemes on similarity calculation:}} To evaluate the efficiency of different summarization schemes, we vary the number of dimensions $k$ in the summary space and measure the time it takes to calculate the edit-distance similarity between all pairs of sequences. 
Figure~\ref{fig:efficiency_all} highlights the results. 
For both $\mathtt{Random}$ and $\mathtt{Topic}$ summarizations\footnote{\scriptsize Again, since we could not control the number of dimensions of $Attribute$, we do not include it in this evaluation. However, $\mathtt{Attribute}$ produces similar efficiency results to the summarization schemes that share similar number of dimensions.}, the higher $k$ is, the longer it takes to calculate the edit-distances. This is expected because a higher $k$ results in longer sequences in the summary space, and thus it is more expensive to calculate the edit-distances. For similar values of $k$, $\mathtt{Topic}$ outperforms $\mathtt{Random}$, which verifies $\mathtt{Topic}$'s ability to capture the semantic relationship between the original dimensions, and thus significantly reduces the size of sequences in the summary space, as well as the processing time.  More importantly, even at different values of $k$ where we observed similar effectiveness of results by $\mathtt{Random}$ and $\mathtt{Topic}$ (e.g., $k=2$ with $\mathtt{Random}$ and $k=10$ with $\mathtt{Topic}$ on the $\mathtt{Lithography}$ dataset in Figure~\ref{fig:fp_all}), $\mathtt{Topic}$ is still much more efficient than $\mathtt{Random}$.

\begin{figure}[tp]
	\centering
	\begin{subfigure}[t]{0.32\linewidth}
        \centering
        \includegraphics[width=1\textwidth]{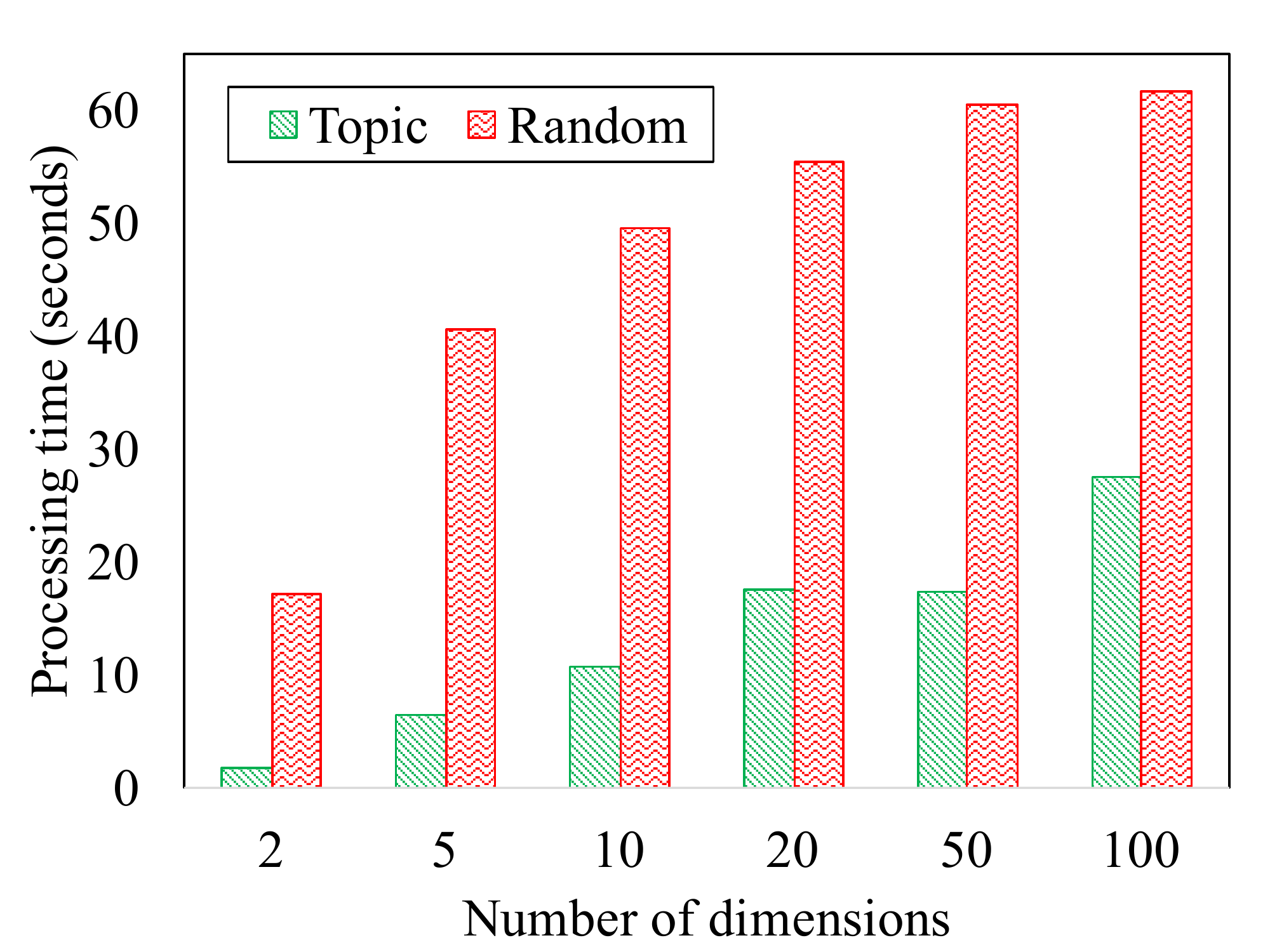}
        \caption{\scriptsize {($\mathtt{Lithography}$)}}
		\label{fig:litho_time}
    \end{subfigure}
    \begin{subfigure}[t]{0.32\linewidth}
        \centering
        \includegraphics[width=1\textwidth]{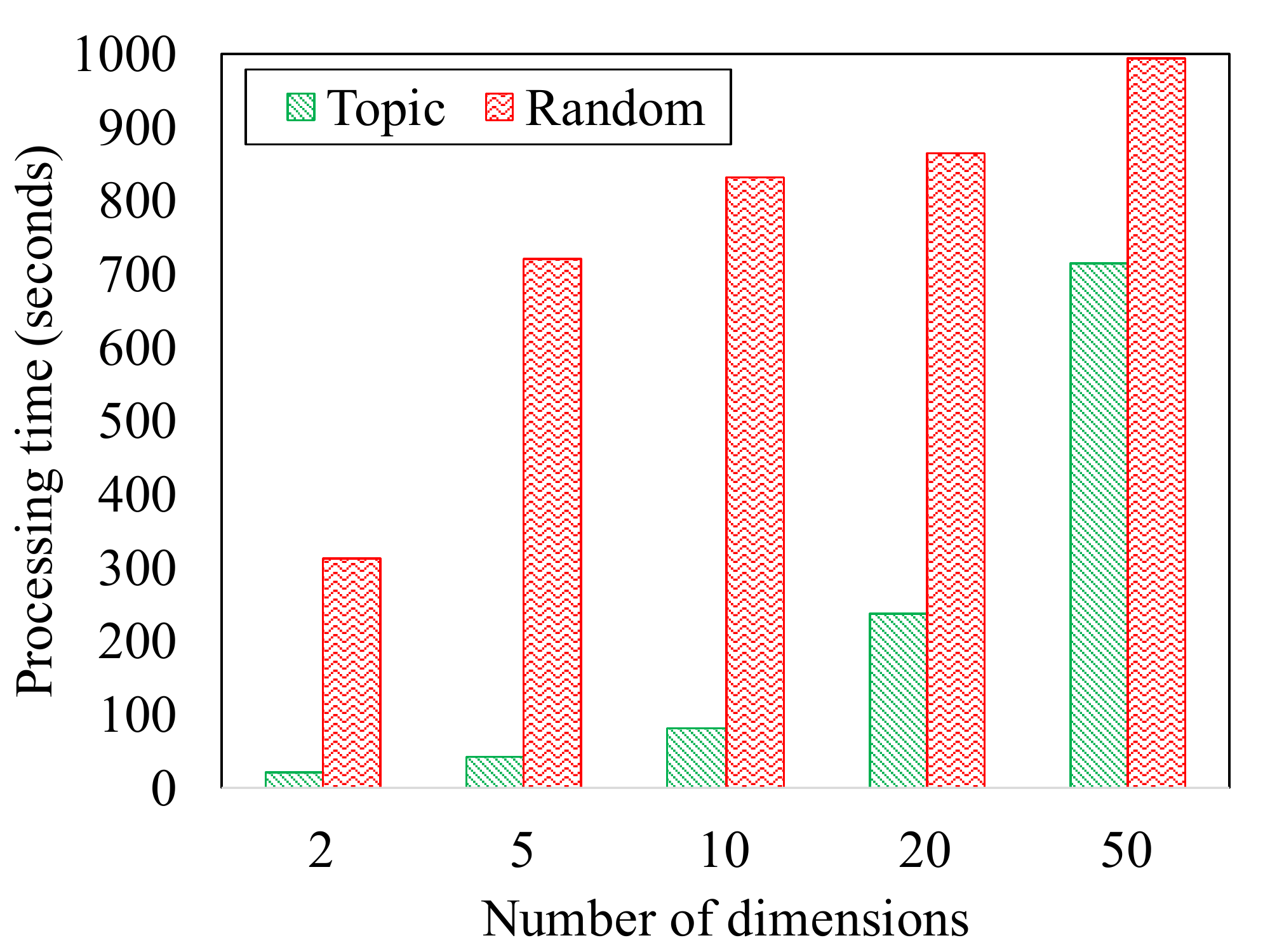}
        \caption{\scriptsize {($\mathtt{BPIC}$)}}
		\label{fig:bpic_time}
    \end{subfigure}
     \begin{subfigure}[t]{0.32\linewidth}
         \centering
         \includegraphics[width=1\textwidth]{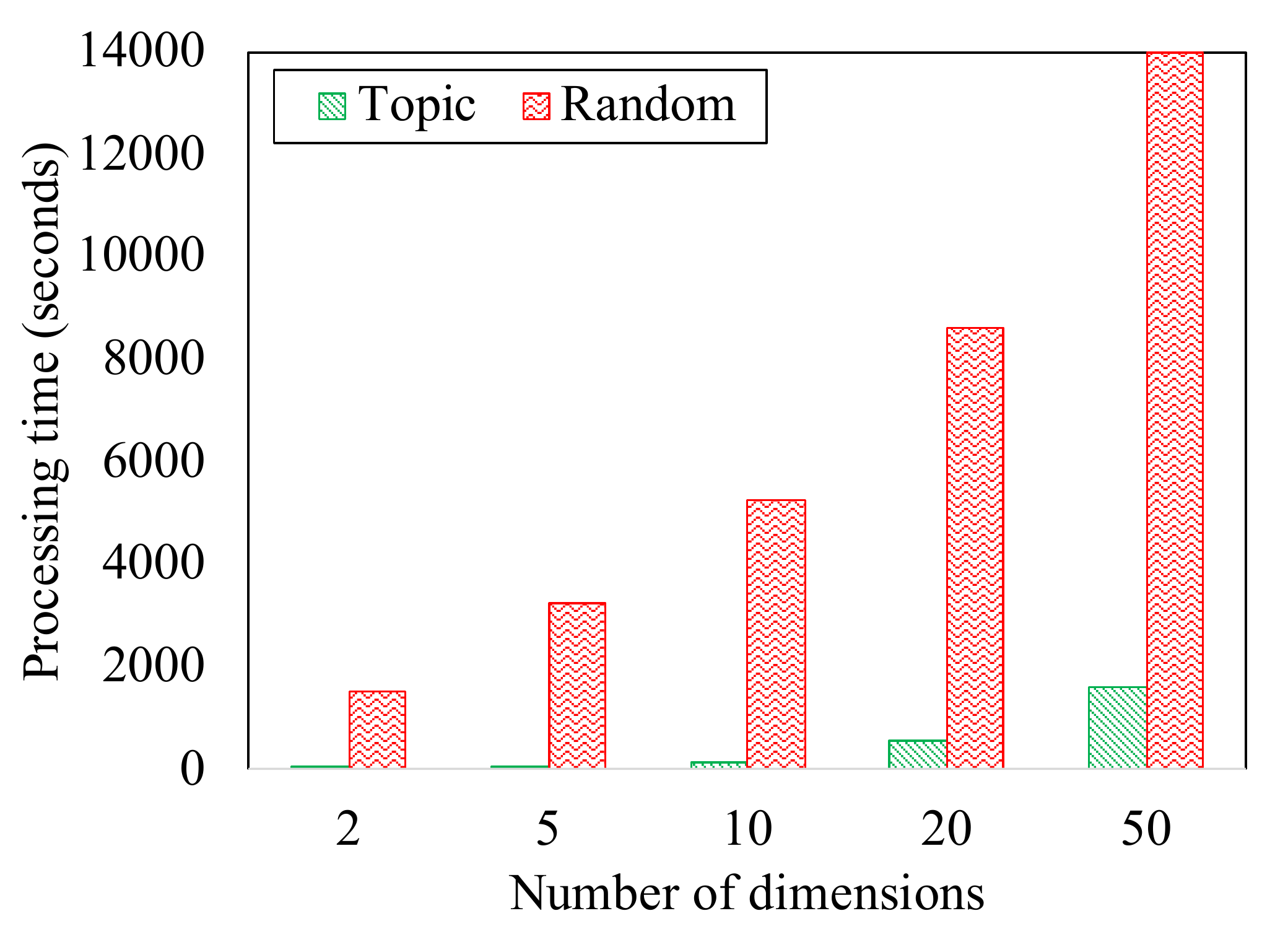}
         \caption{($\mathtt{Bank}$)}
	 	\label{fig:bank_time}
     \end{subfigure}
	\caption{Efficiency comparison of processing time between $\mathtt{Random}$ and $\mathtt{Topic}$ summarizations using the $\mathtt{Lithography}$, $\mathtt{BPIC}$, and $\mathtt{Bank}$ datasets.}
     \label{fig:efficiency_all}
\vspace{-0.2in}
\end{figure}

{\noindent \textbf{\textit{Evaluation metrics for the traces clustering task:}}} We evaluate the clustering results using process-specific metrics \cite{bose2009context}\cite{de2013active}: weighted average conformance fitness, and weighted average structure complexity. While the process model's conformance fitness quantifies the extent to which the discovered model can accurately reproduce the recorded traces, the structure complexity quantifies whether the clustering results produce process models that are simple and compact. Given a summarization scheme, we first transform all sequences to the summary space, and then perform traces clustering (using hierarchical clustering) with edit-distance as the similarity measure. Then, a process model is generated for each cluster using the Heuristic mining algorithm \cite{weijters2006process} and then converted to the Petri-Net model for conformance analysis. Given the Petri-net model, we use two publicly available plugins from the ProM framework \cite{van2005prom} for fitness and structural complexity analysis: The Conformance Checker Plugin is used to measure the fitness of the generated process models and the Petri-Net Complexity Analysis Plugin is used to analyze the structural complexity of the process models. After fitness and complexity scores are calculated for each cluster, the final scores are calculated as the average score over all clusters, weighted by the cluster size.

{\noindent \textbf{Effectiveness of summarization schemes on traces clustering:} 
Figure~\ref{fig:fitness} highlights the \textit{conformance fitness} of the clustering results in the summary space by different summarization schemes\footnote{\scriptsize We use $k=2$ for $\mathtt{Random}$, and $k=20$ for $\mathtt{Topic}$, as the two configurations share similar effectiveness in the similarity search task.} on the $\mathtt{Lithography}$ dataset. 
Surprisingly, using summarization schemes not only helps improve the efficiency of the clustering task (as we showed earlier in the efficiency evaluation), but also helps produce clusters with process models of higher fitness, compared with the clustering results in the original space. The trend is similar when varying the number of clusters $N$. That is because measuring trace similarity on the summary space helps remove noise that often exists when measuring similarity using the original representation. 
Among summarization schemes, $\mathtt{Attribute}$ helps produce clustering results of higher conformance fitness (especially when using the $Tracked By$ attribute). That is because $\mathtt{Attribute}$ summarizations capture better the semantic relationship between traces (e.g., traces are similar if the corresponding sequences of $Sector$, $Tool$, or $Tracked By$ are similar).

 \begin{figure}[tp]
    \begin{minipage}{0.35\textwidth}
    \centering
   \includegraphics[width=1\linewidth]{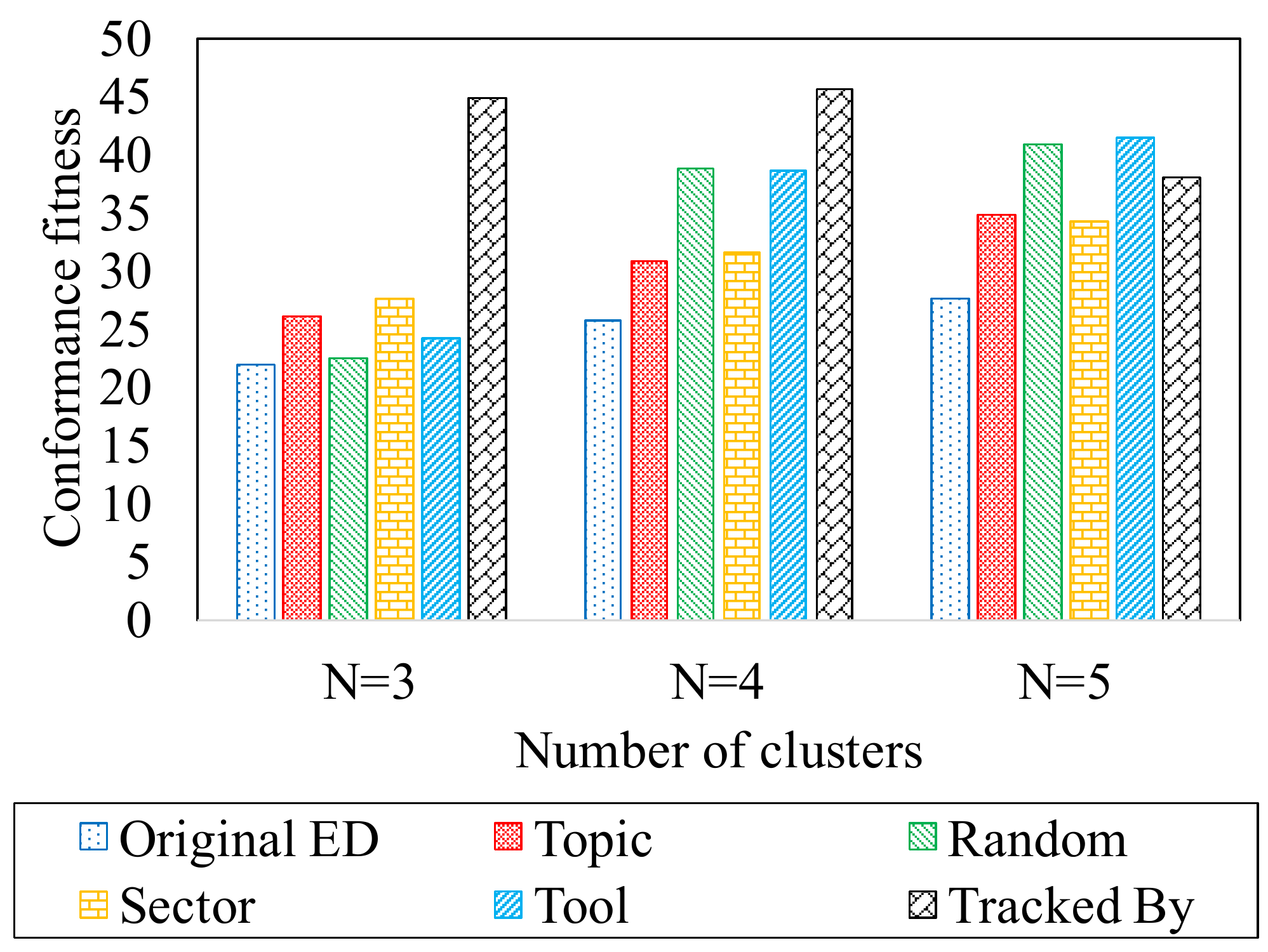}
 \caption{Conformance fitness comparison.}
 \label{fig:fitness}
\end{minipage}
~
  \begin{minipage}{0.63\textwidth}
  \centering
   \includegraphics[width=1\linewidth]{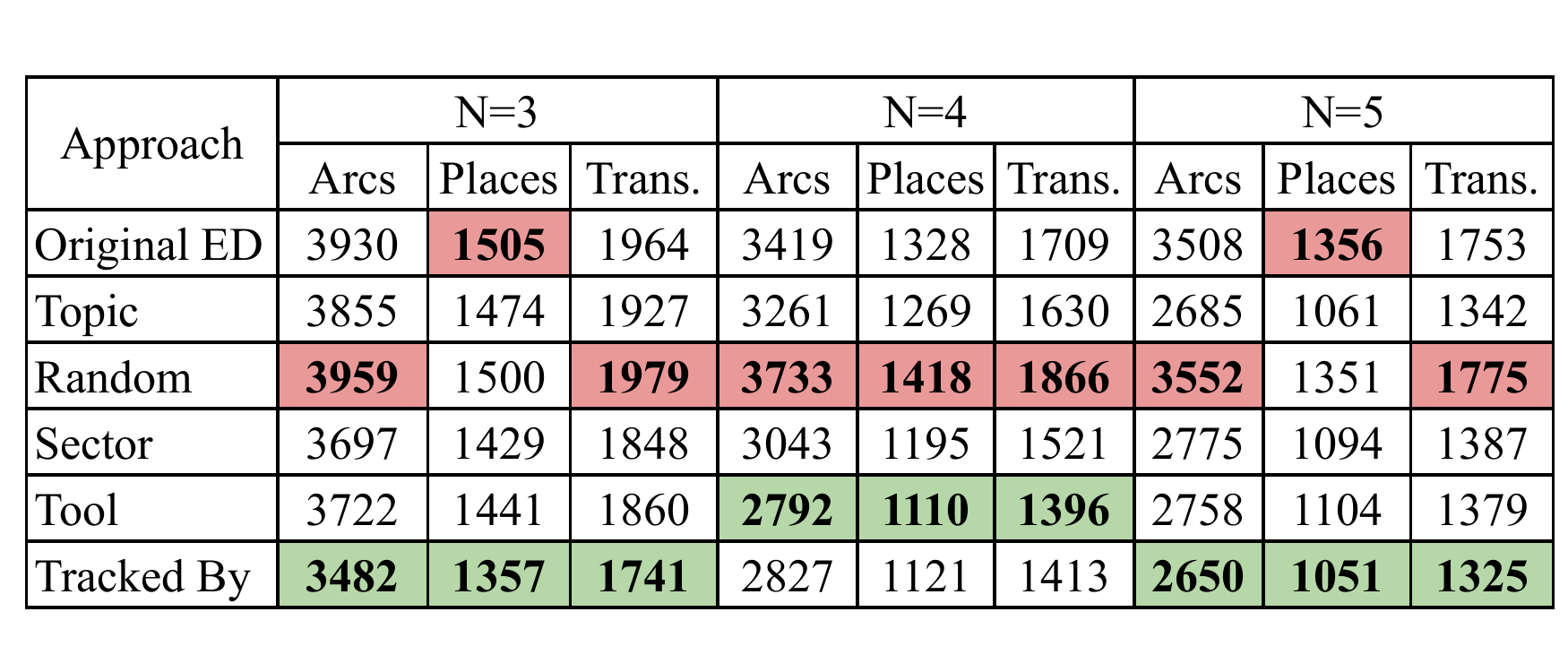}
 \caption{Traces clustering results' structural complexity comparison. (Green and red boxes denote best and worst results, respectively.)}
 \label{fig:complexity}
\end{minipage}
\vspace{-0.2in}
\end{figure}

In terms of the structural complexity (Figure~\ref{fig:complexity}), the $\mathtt{Attribute}$ summarizations outperform other summarization schemes and the results in the original space. This is again due to $\mathtt{Attribute}$'s ability to capture semantic relationships between traces, and thus, it helps produce clusters whose process models capture actual groups of traces that share similar semantic (and thus, have simple model structure). On the other hand, $\mathtt{Random}$ is the worst performer, due to the fact that random summarization could not capture the semantic relationship between traces.

In both conformance fitness and structural complexity tests, the $\mathtt{Topic}$ summarization produces results that approach that of $\mathtt{Attribute}$. Unlike $\mathtt{Attribute}$ summarization, which does not give users control over the resolution of the summaries, $\mathtt{Topic}$ summarization provides a qualitative advantage in offering a tunable parameter, $k$, to trade-off between the effectiveness and efficiency in the analysis task.



\section{Conclusions and Future Works}
\label{sec:conclusions}
In this work, we introduce \textsc{Summarized}, a framework to perform efficient analysis on sequence-based multi-dimensional data using intuitive and user-controlled summarizations. We define a set of summarization schemes that offer flexible trade-off between quality and efficiency of analysis tasks and derive an error model for summary-based similarity under an edit-distance constraint. Evaluation results on real-world datasets show the effectiveness and efficiency of \textsc{Summarized}. For future work, we plan to apply our framework to other application domains, and design our framework to run on distributed infrastructure (e.g., Map-Reduce, Spark).

\vspace{-0.1 in}
\bibliographystyle{splncs04}
\bibliography{summarized}
\end{document}